\documentclass[letterpaper, 10 pt, conference]{ieeeconf}
\IEEEoverridecommandlockouts
\usepackage{amssymb}
\usepackage{amsmath,amsfonts}
\usepackage{mathtools}
\usepackage{dsfont}
\usepackage{algpseudocode}
\usepackage{array}
\usepackage{textcomp}
\usepackage{stfloats}
\usepackage{url}
\usepackage{verbatim}
\usepackage{graphicx}
\usepackage[hidelinks]{hyperref}
\usepackage{subcaption}
\usepackage[affil-it]{authblk}
\usepackage{fancyhdr}
\usepackage{bm}
\usepackage{enumerate}  
\usepackage{cite}
\usepackage[linesnumbered,ruled,vlined]{algorithm2e}
\SetKwInput{KwInput}{Input}
\SetKwInput{KwOutput}{Output}
\SetKw{KwAssert}{Assert}

\usepackage{amsthm}
\usepackage{setspace}
\newtheorem{theorem}{Theorem}
\newtheorem{lemma}{Lemma}
\newtheorem{proposition}{Proposition}

\newtheorem{remark}{Remark}
\newtheorem{assumption}{Assumption}

\usepackage{setspace}

\usepackage{tikz}
\usepackage[framemethod=TikZ]{mdframed}
\usetikzlibrary{shapes,arrows}
\usetikzlibrary{arrows,calc,positioning}

\tikzset{
	block/.style = {draw, rectangle,
		minimum height=1.2cm,
		minimum width=1.2cm},
	input/.style = {coordinate,node distance=1cm},
	output/.style = {coordinate,node distance=2cm},
	arrow/.style={draw, -latex,node distance=2cm},
	pinstyle/.style = {pin edge={latex-, black,node distance=1cm}},
	sum/.style = {draw, circle, node distance=1cm},
}

\definecolor{backgreen}{HTML}{E9F3DF}
\definecolor{backblue}{HTML}{DAE5EC}
\definecolor{backpurple}{HTML}{E5E0E8}
\definecolor{backorange}{HTML}{F2E9DF}
\definecolor{lightergray}{gray}{0.9}

\mdfdefinestyle{callout}{%
	linecolor=black,
	linewidth=1pt,%
	roundcorner=0pt,
	innertopmargin=7pt,
	innerbottommargin=7pt,
	innerrightmargin=7pt,
	innerleftmargin=7pt,
	leftmargin = 0pt,
	rightmargin = 0pt,
	backgroundcolor=lightergray
}

\usepackage{dsfont}

\usepackage{dirtytalk}

\usepackage{endnotes}

\definecolor{cadetblue}{rgb}{0.33, 0.41, 0.58}

\DeclareMathOperator{\argmin}{\mathrm{argmin}}

\DeclareMathOperator{\EV}{\mathds{E}}

\DeclareMathOperator{\col}{\mathrm{col}}
\DeclareMathOperator{\diag}{\mathrm{diag}}
\DeclareMathOperator{\rank}{\mathrm{rank}}

\DeclareMathAlphabet{\doublestruck}{U}{BOONDOX-ds}{m}{n}
\newcommand{\ones}[0]{\mathds{1}}
\newcommand{\zeros}[0]{\doublestruck{0}}

\newcommand{\R}[0]{\mathbb{R}}

\newcommand{\Rnn}[0]{\mathbb{R}_{\geq0}}

\newcommand{\Pcal}[0]{\mathcal{P}}

\usepackage{xstring}

\title{\LARGE \bf
	A Unified Family-optimal Solution to Covariance\\Intersection Problems with Semidefinite Programming}

\author{Leonardo Pedroso, W.P.M.H. (Maurice) Heemels, Pedro Batista
	\thanks{L.~Pedroso and W.P.M.H.~Heemels are with the Control Systems Technology section, Eindhoven University of Technology, The Netherlands (e-mail: \{l.pedroso,m.heemels\}@tue.nl). L.~Pedroso and P.~Batista are with the Institute for Systems and Robotics, Instituto Superior T\'ecnico, Universidade de Lisboa, Portugal (e-mail: pbatista@isr.tecnico.ulisboa.pt).}%
}

\begin{document}

\maketitle
\pagestyle{plain}

\begin{abstract}
	Covariance intersection (CI) methods provide a principled approach to fusing estimates with unknown cross-correlations by minimizing a worst-case measure of uncertainty that is consistent with the available information. This paper introduces a generalized CI framework, called overlapping covariance intersection (OCI), which unifies several existing CI formulations within a single optimization-based framework. This unification enables the characterization of family-optimal solutions for multiple CI variants, including standard CI and split covariance intersection (SCI), as solutions to a semidefinite program, for which efficient off-the-shelf solvers are available. When specialized to the corresponding settings, the proposed family-optimal solutions recover the state-of-the-art family-optimal solutions previously reported for CI and SCI. The resulting formulation facilitates the systematic design and real-time implementation of CI-based fusion methods in large-scale distributed estimation problems, such as cooperative localization.
\end{abstract}


\vspace{-0.2cm}

\section{Introduction}

Many recent engineering applications involve networks of agents operating cooperatively in a shared environment. Agents exchange information and collectively infer quantities of interest that support tasks such as navigation, coordination, and situational awareness. For example, in satellite mega-constellations, spacecraft combine absolute positioning information with relative observations obtained through GNSS sensors and inter-agent communication to estimate their states \cite{FergusonHow2003,PedrosoBatista2023DistributedEKF}. Similarly, in vehicle-to-everything (V2X) systems, autonomous vehicles integrate onboard sensing with information received from surrounding infrastructure to infer the states of nearby vehicles and pedestrians \cite{QiuQiuEtAl2019}.

As the number of agents and information links grows, keeping track of the correlation between all pieces of information across the network quickly becomes computationally prohibitive. Systems of this nature are classified as \emph{ultra large-scale systems}, for which centralized design and estimation strategies are computationally infeasible \cite{PedrosoBatistaEtAl2025ULSS}. Instead, distributed fusion approaches are necessary. This fact precludes the use of classical centralized filtering approaches that require full knowledge of all correlations \cite{AndersonMoore1979}. However, ignoring these correlations during fusion may result not only in degraded estimation accuracy, but also in \emph{inconsistent} uncertainty representations, i.e., agents may produce error covariance estimates that are unrealistically small relative to the true uncertainty. This is a well-known effect referred to as \emph{double-counting} \cite{PanzieriPascucciSetola2006,ChangChongEtAl2010}. Such inconsistency can have serious consequences in safety-critical applications, such as debris avoidance maneuvers for satellites. 

	To avoid double-counting, a family of methods known as \emph{covariance intersection} (CI) has been developed over the past quarter century \cite{ForslingNoackEtAl2024}. CI techniques provide a principled approach for combining estimates originating from multiple sources when the underlying cross-correlations are unknown or partially known. By construction, the fused estimate is guaranteed to remain consistent, meaning that the associated covariance matrix upper bounds the true estimation error covariance. The fusion weights are selected to minimize a worst-case uncertainty metric over all cross-correlations compatible with the locally available covariance information.


In this paper, we study a generalized covariance intersection formulation, called overlapping covariance intersection (OCI), which extends a problem originally introduced in \cite{PedrosoBatistaEtAl2025OCI}. The OCI framework admits a family-optimal solution that can be efficiently obtained as the solution of a semidefinite program (SDP), for which off-the-shelf solvers with polynomial worst-case complexity are available. A key insight of this work is that several existing CI-based fusion formulations can be naturally embedded within the OCI framework. As a consequence, their corresponding family-optimal solutions can be derived within a single unifying optimization-based approach and computed by solving a SDP. This unified characterization enables the systematic and computationally efficient implementation of CI-like fusion methods in ultra large-scale settings, including real-time cooperative localization.\vspace{-0.1cm}

\subsection{Contributions}

Specifically, the contributions of this paper are threefold:
\begin{itemize}
	\item We derive a family-optimal solution for a OCI problem, extending the formulation previously proposed in \cite{PedrosoBatistaEtAl2025OCI}.
	\item We show that the standard CI problem \cite{JulierUhlmann2017} and the split covariance intersection (SCI) problem \cite{LiNashashibiEtAl2013} can be cast as particular instances of the OCI framework.
	\item We express family-optimal solutions to CI and SCI problems with an arbitrary number of partial estimates resorting to a generalization of the family commonly used in the CI literature. For the first time, we characterize such family-optimal solutions as the solution to a SDP. A numerical implementation with working examples is made available in an open-source repository.\vspace{-0.1cm}
\end{itemize}


\subsection{Notation}\vspace{-0.1cm}
Throughout this paper, the $n\times n$ identity, $n\times m$ null, and $n\times m$ ones matrices are denoted by $\mathbf{I}_n$, $\zeros_{n\times m}$, and $\ones_{n\times m}$, respectively. When clear from the context, the subscripts will be dropped to streamline notation.
The sets of $n\times n$ real symmetric positive semidefinite and positive definite matrices are denoted by $S^n_+$ and  $S^n_{++}$, respectively. Moreover, $\mathbf{P} \succ \zeros$ ($\mathbf{P}\succeq \zeros$) denotes that the symmetric matrix $\mathbf{P} \in \R^{n\times n}$ is positive definite (semidefinite) and $\mathbf{P} \succ \mathbf{Q}$ ($\mathbf{P} \succeq \mathbf{Q}$) denotes that the symmetric matrix $\mathbf{P}-\mathbf{Q} \in \R^{n\times n}$  is positive definite (semidefinite). Given a matrix $\mathbf{A}\in \R^{n\times m}$, $\mathbf{A}^+$ denotes the Moore-Penrose inverse of $\mathbf{A}$ \cite[Chap.~1.6]{Zhang2005}.


\section{Unifying Problem}\label{sec:oci}
\vspace{-0.1cm}
In this section, we present a generalized CI problem and a family-optimal solution that can be computed efficiently. The unifying CI problem is called overlapping covariance intersection (OCI). A particular case of this problem was recently introduced in \cite{PedrosoBatistaEtAl2025OCI}.


\vspace{-0.1cm}
\subsection{Overlapping Covariance Intersection Problem}\label{sec:oci_probelm}

The goal is to estimate a state  $\mathbf{x} \in \R^n$ with information about $N$ partial estimates $\mathbf{z}_i = \mathbf{H}_i\mathbf{x}+\mathbf{e}_i$ with $i = 1,2,\ldots,N$. Each $\mathbf{H}_i$ is known and $\mathbf{e}_i$ is zero-mean random noise. One can concatenate all estimates as $\mathbf{z} := [\mathbf{z}_1^\top,\ldots,\mathbf{z}_N^\top]^\top \in \R^{o}$ and write $\mathbf{z} = \mathbf{H}\mathbf{x} + \mathbf{e}$, where $\mathbf{H} := [\mathbf{H}_1^\top \;\; \cdots \;\; \mathbf{H}_N^\top]^\top \in \R^{o\times n}$ and $\mathbf{e} := [\mathbf{e}_1^\top,\ldots,\mathbf{e}_N^\top]^\top  \in \R^{o}$. The second moment of $\mathbf{e}$ is denoted by $\EV[\mathbf{e}\mathbf{e}^\top]$. We want to design a linear fusion law to compute an unbiased estimator $\mathbf{\hat{x}} = \mathbf{K}\mathbf{z}$, where $\mathbf{K}\in \R^{n\times o}$ is the fusion gain. The estimate $\mathbf{\hat{x}}$ is unbiased if $\EV[\mathbf{\hat{x}}] = \mathbf{x}$, which is equivalent to imposing $\mathbf{K}\mathbf{H} = \mathbf{I}$ when designing $\mathbf{K}$. We consider that $\EV[\mathbf{e}\mathbf{e}^\top]$ is not exactly known and has the form \vspace{-0.1cm}
\begin{equation}\label{eq:E_ee}
	\EV[\mathbf{e}\mathbf{e}^\top] = \mathbf{R} + \mathbf{C}\mathbf{P}\mathbf{C}^\top, \vspace{-0.1cm}
\end{equation}%
where $\mathbf{R}\in S^o_{+}$ and $\mathbf{C} \in \R^{o\times m}$ are known and $\mathbf{P}\in S^m_{++}$ is not exactly known. We consider that there is information about $M$ bounds on components of $\mathbf{P}$. Each bound  $b \in \{1,2, \ldots,M\}$ is written as $\mathbf{W}_b\mathbf{P}\mathbf{W}_b^\top \preceq \mathbf{X}_b$, where $\mathbf{W}_b \in \R^{o_b\times m}$ and $\mathbf{X}_b \in S_{++}^{o_b}$.  Thus, the set of admissible matrices $\mathbf{P}$ is given by \vspace{-0.05cm}
\begin{equation}\label{eq:Pcal_def}
	\Pcal \!:=\! \{\mathbf{P} \!\in \!S_{++}^m \!:\! \mathbf{W}_b\mathbf{P}\mathbf{W}_b^\top \!\preceq \mathbf{X}_b \;\forall b \in\! \{1,2,\ldots,M\} \}.\vspace{-0.05cm}
\end{equation}

The estimation error is a random vector denoted by $\mathbf{\tilde{x}} := \mathbf{\hat{x}}-\mathbf{x}$. From the constraint $\mathbf{K}\mathbf{H} = \mathbf{I}$ on the gain, it follows that $\mathbf{\tilde{x}} =\mathbf{K}\mathbf{z}-\mathbf{x} = \mathbf{K}(\mathbf{H}\mathbf{x}+\mathbf{e})-\mathbf{x} = \mathbf{K}\mathbf{e}$ and $\EV[\mathbf{\tilde{x}}\mathbf{\tilde{x}}^\top] = \mathbf{K}	\EV[\mathbf{e}\mathbf{e}^\top] \mathbf{K}^\top$. The OCI problem is to design a gain $\mathbf{K}$ that optimizes an upper bound on the worst-case second moment of the estimation error for all admissible $\mathbf{P}\in \Pcal$. Formally, the goal is to solve the optimization problem
\begin{equation}\label{eq:OCI_orig_prob}
	\begin{aligned}
			&\min_{\substack{\mathbf{K}\in \R^{n\times o}, \mathbf{B}\in S^n_+}}  && J(\mathbf{B})\\
			&\quad\quad\; \mathrm{s.t.} &&  \mathbf{K}\mathbf{H} = \mathbf{I}\\  
			&&&	\mathbf{B} \succeq \mathbf{K}(\mathbf{R} + \mathbf{C}\mathbf{P}\mathbf{C}^\top)\mathbf{K}^\top,\; \forall \mathbf{P} \in \Pcal,
		\end{aligned}
\end{equation}
where $J: S^n_+ \to \R$ is any optimality criterion that satisfies the following monotonicity condition (which holds for common criteria in fusion applications such as the trace or determinant).

\begin{assumption}\label{ass:J}
	Given $\mathbf{X},\mathbf{Y} \in S^n_+$, the map $J: S^n_+ \to \R$ is such that $\mathbf{X} \succ \mathbf{Y} \implies J(\mathbf{X}) > J(\mathbf{Y})$.
\end{assumption}


\subsection{Efficient Family-optimal Solution}\label{sec:oci_sol}

Computational efficiency is essential for applications of CI algorithms such as cooperative localization. However, it is computationally challenging to numerically solve the optimization problem \eqref{eq:OCI_orig_prob} because it is nonconvex and the fact that the second constraint is not supported by off-the-shelf solvers. To enable a computationally efficient solution, similarly to the literature on CI problems, we parameterize a well-behaved family of bounds for all $\mathbf{P}\! \in\! \Pcal$. The family is characterized in a way that the particularization of  \eqref{eq:OCI_orig_prob} to the family of bounds is computationally efficient and we call the corresponding solution a \emph{family-optimal} solution.

In this paper, we use a family of circumscribing ellipsoids studied in \cite{Kahan1968}, which is characterized by
\begin{equation}\label{eq:kahan_family}
	\textstyle\Pcal_{\mathrm{KF}}(\boldsymbol{\omega}):= \left\{\mathbf{P} \in S_{++}^m : \mathbf{P}^{-1} \succeq  \sum_{b=1}^M\boldsymbol{\omega}_b\mathbf{Y}_b\right\},
\end{equation}
where  $\mathbf{Y}_b := \mathbf{W}_b^\top\mathbf{X}_b^{-1} \mathbf{W}_b$ for $b =1,2,\ldots,M$ and $\boldsymbol{\omega} \in  \Delta^M := \{\boldsymbol{\omega} \in \Rnn^{M} : \ones^\top \boldsymbol{\omega} = 1 \}$ is a vector that parameterizes the family. Henceforth, we call this the \emph{Kahan family} of bounding ellipsoids. Remarkably, this family is a generalization of the most common family of bounding ellipsoids used for the standard CI problem, e.g., \cite{ReinhardtNoackEtAl2015}.  The following lemma establishes that $\Pcal_{\mathrm{KF}}(\boldsymbol{\omega})$ is indeed a bound on $\Pcal$ for all $\boldsymbol{\omega} \in  \Delta^M$.

\begin{lemma}\label{lem:kahan_conservative}
	The Kahan family defined in \eqref{eq:kahan_family} parameterizes bounds on $\Pcal$ defined in \eqref{eq:Pcal_def}, in the sense that $\Pcal \!\subseteq \!\Pcal_{\mathrm{KF}}(\boldsymbol{\omega})$ for all $\boldsymbol{\omega} \!\in\!  \Delta^M\!$.
\end{lemma}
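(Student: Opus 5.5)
Fix $\mathbf{P}\in\Pcal$ and $\boldsymbol{\omega}\in\Delta^M$. The plan is to show $\mathbf{P}^{-1}\succeq \mathbf{Y}_b$ for each $b$ separately. Convexity of the simplex then gives
\[
\mathbf{P}^{-1} = \textstyle\sum_{b=1}^M \boldsymbol{\omega}_b \mathbf{P}^{-1} \succeq \sum_{b=1}^M \boldsymbol{\omega}_b \mathbf{Y}_b,
\]
because $\boldsymbol{\omega}_b\ge 0$, $\ones^\top\boldsymbol{\omega}=1$, and nonnegative combinations of Loewner inequalities are preserved. This is exactly $\mathbf{P}\in\Pcal_{\mathrm{KF}}(\boldsymbol{\omega})$. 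Everything therefore reduces to the single-bound claim
\[
\mathbf{W}\mathbf{P}\mathbf{W}^\top \preceq \mathbf{X},\ \mathbf{P}\succ\zeros,\ \mathbf{X}\succ\zeros \implies \mathbf{W}^\top\mathbf{X}^{-1}\mathbf{W}\preceq \mathbf{P}^{-1}.
\]

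To prove this claim, the key step, I would use a Schur complement argument. Since $\mathbf{P}\succ\zeros$, the matrix
\[
\mathbf{M}:=\begin{bmatrix}\mathbf{X} & \mathbf{W}\\ \mathbf{W}^\top & \mathbf{P}^{-1}\end{bmatrix}
\]
is positive semidefinite if and only if its Schur complement with respect to the block $\mathbf{P}^{-1}$ is, namely $\mathbf{X}-\mathbf{W}\mathbf{P}\mathbf{W}^\top\succeq\zeros$, which is exactly the hypothesis. Hence $\mathbf{M}\succeq\zeros$. Since also $\mathbf{X}\succ\zeros$, the Schur complement with respect to the block $\mathbf{X}$ must be positive semidefinite as well, i.e., $\mathbf{P}^{-1}-\mathbf{W}^\top\mathbf{X}^{-1}\mathbf{W}\succeq\zeros$, which is the claim.

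As a sanity check I could argue directly instead. Write $\mathbf{A}:=\mathbf{X}^{-1/2}\mathbf{W}\mathbf{P}^{1/2}$. The hypothesis reads $\mathbf{A}\mathbf{A}^\top\preceq\mathbf{I}$, so $\|\mathbf{A}\|\le 1$, hence $\mathbf{A}^\top\mathbf{A}\preceq\mathbf{I}$, i.e., $\mathbf{P}^{1/2}\mathbf{W}^\top\mathbf{X}^{-1}\mathbf{W}\mathbf{P}^{1/2}\preceq\mathbf{I}$. Congruence by $\mathbf{P}^{-1/2}$ then gives the claim.

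The only step requiring care is this single-bound implication. There, $\mathbf{W}_b$ need not be square or full rank, so one cannot simply invert both sides of the inequality. Both arguments above avoid inverting $\mathbf{W}_b\mathbf{P}\mathbf{W}_b^\top$: they use only the invertibility of $\mathbf{X}_b$ and $\mathbf{P}$, which is guaranteed by $\mathbf{X}_b\in S^{o_b}_{++}$ and $\mathbf{P}\in S^m_{++}$. The convex-combination step is routine.
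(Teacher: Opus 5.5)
Your proof is correct and follows the paper's argument: establish $\mathbf{P}^{-1}\succeq\mathbf{Y}_b$ for each $b$, then take the convex combination using $\boldsymbol{\omega}_b\ge 0$ and $\ones^\top\boldsymbol{\omega}=1$. The only difference is that the paper cites an external lemma for the single-bound implication, whereas you prove it yourself (via the two Schur complements, or via the operator-norm argument), and both of your arguments are valid.
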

\begin{proof}
	See Appendix~\ref{sec:proof_lem_kahan_conservative}.
\end{proof}

Incorporating the conservative information constraints into the OCI problem \eqref{eq:OCI_orig_prob} leads to the \emph{Kahan-family} OCI problem
\begin{equation}\label{eq:OCI_kahan}
	\begin{aligned}
			&\!\!\!\min_{\substack{\mathbf{K}\in \R^{n\!\times \!o}, \mathbf{B}\in S^n_+\\ \boldsymbol{\omega} \in \Delta^M}} && \!\!\!\! J(\mathbf{B})\\
			&\!\!\!\;\quad\quad \mathrm{s.t.} &&  \!\!\!\!  \mathbf{K}\mathbf{H} = \mathbf{I}\\  
			&&& \!\!\!\! 	\mathbf{B} \succeq \mathbf{K}(\mathbf{R} \!+\! \mathbf{C}\mathbf{P}\mathbf{C}^\top)\mathbf{K}^\top\!\!,\; \forall \mathbf{P} \in \Pcal_{\mathrm{KF}}(\boldsymbol{\omega}).
		\end{aligned}%
\end{equation}%
From Lemma~\ref{lem:kahan_conservative} one concludes that introducing the Kahan family in \eqref{eq:OCI_kahan} amounts to tightening the constraint $\mathbf{B}\! \succeq\! \mathbf{K}(\mathbf{R} + \mathbf{C}\mathbf{P}\mathbf{C}^\top)\mathbf{K}^\top,\, \forall \mathbf{P} \!\in\!\Pcal,$ in the original OCI problem \eqref{eq:OCI_orig_prob}. Notice that $M-1$ degrees of freedom have been introduced via $\boldsymbol{\omega}$ in \eqref{eq:OCI_kahan} to parameterize the family. Given a solution $(\mathbf{K}^\star,\mathbf{B}^\star,\boldsymbol{\omega}^\star)$ to \eqref{eq:OCI_kahan}, we say that the pair $(\mathbf{K}^\star,\mathbf{B}^\star)$ is the \emph{Kahan-family-optimal solution} to the OCI problem \eqref{eq:OCI_orig_prob}. 

In the next two results, we provide a computationally efficient characterization of the  \emph{Kahan-family-optimal solution}, i.e, a solution to optimization problem \eqref{eq:OCI_kahan}, for two separate cases: (i)~$\mathbf{R} \succ \zeros$, which was analyzed in \cite{PedrosoBatistaEtAl2025OCI}; and (ii)~$\mathbf{R} = \zeros$, which is analyzed for the first time in this paper and does not follow immediately from the results or proofs in \cite{PedrosoBatistaEtAl2025OCI}. The results corresponding to each of these cases will be used in Sections~\ref{sec:CI} and~\ref{sec:SCI} to characterize solutions to the CI and SCI problems.

\begin{theorem}[$\!${\cite[Corollary~2]{PedrosoBatistaEtAl2025OCI}}]\label{th:OCI_Kahan_optimal_Rpd}
	Under Assumption~\ref{ass:J}, if $\mathbf{R}\succ \zeros$, the pair $(\mathbf{K}^\star,\mathbf{B}^\star)$ is a Kahan-family-optimal solution to the OCI problem \eqref{eq:OCI_orig_prob}, where $(\mathbf{U}^\star, \mathbf{B}^\star, \boldsymbol{\omega}^\star)\in$
	\begin{equation}\label{eq:OCI_Kahan_optimal_Rdp}
		 \begin{aligned}
						& \!\!\! \underset{\substack{\mathbf{U},\mathbf{B}\in S_+^n,\\\boldsymbol{\omega} \in \Delta^M}}{\argmin} &\quad \!\!\!\!\!&J(\mathbf{B})\\
					\!\!\!& \!\!\!\quad\mathrm{s.t.} && \!\!\begin{bmatrix} \mathbf{B}&  \mathbf{I} \\ \mathbf{I} & \mathbf{H}^\top\mathbf{R}^{-1}\mathbf{H}-\mathbf{U}\end{bmatrix}  \succeq \zeros\\
					&  && \!\!\begin{bmatrix} \mathbf{U}&  \mathbf{H}^\top\mathbf{R}^{-1}\mathbf{C} \\ (\mathbf{H}^\top\mathbf{R}^{-1}\mathbf{C} )^\top & \!\!\sum_{b=1}^M\!\boldsymbol{\omega}_b\mathbf{Y}_b +\mathbf{C}^\top\! \mathbf{R}^{-1}\mathbf{C}\end{bmatrix} \! \succeq \!\zeros\!\!\!\!  
				\end{aligned} \!\!\!
		\end{equation}
	and \vspace{-0.3cm}
	\begin{equation*}
		\begin{split}
				\mathbf{K}^\star \! := \!&\left(\!\!\mathbf{H}^\top\!\mathbf{R}^{-\!1}\!\!\left(\!\!\mathbf{R}\!-\!\mathbf{C}\!\left( \sum_{b=1}^M\!\boldsymbol{\omega}^\star_b\mathbf{Y}_b\!+\!\mathbf{C}^\top\! \mathbf{R}^{-\!1}\mathbf{C}\!\!\right)^{\!\!+}\!\!\! \!\mathbf{C}^\top\!\!\right)\!\!\mathbf{R}^{\!-\!1}\mathbf{H}\!\!\right)^{\!\!\!-\!1}\\
				&\;\;\;\!\mathbf{H}^\top\!\mathbf{R}^{-\!1}\!\!\left(\!\!\mathbf{R}\!-\!\mathbf{C}\!\left( \sum_{b=1}^M\!\boldsymbol{\omega}^\star_b\mathbf{Y}_b\!+\!\mathbf{C}^\top\! \mathbf{R}^{-\!1}\mathbf{C}\!\!\right)^{\!\!+}\!\!\! \!\mathbf{C}^\top\!\!\right)\!\!\mathbf{R}^{\!-\!1}.
			\end{split}
	\end{equation*}
		Furthermore, \eqref{eq:OCI_Kahan_optimal_Rdp} is feasible if and only if $\mathbf{H}^\top\mathbf{R}^{-1} ( \mathbf{R}-\mathbf{C}(\mathbf{W}^\top\mathbf{W}+\mathbf{C}^\top \mathbf{R}^{-1}\mathbf{C})^+ \mathbf{C}^\top)\mathbf{R}^{-1}\mathbf{H}$ is full rank or, equivalently, the OCI problem \eqref{eq:OCI_orig_prob} is feasible, where $\mathbf{W} := [\mathbf{W}_1^\top \; \mathbf{W}_2^\top \; \cdots \; \mathbf{W}_M^\top]^\top$.
	\end{theorem}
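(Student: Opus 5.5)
The plan is to reduce the semi-infinite constraint in \eqref{eq:OCI_kahan} to a finite matrix inequality for each fixed $\boldsymbol{\omega}$. Then we optimize over $\mathbf{K}$ in closed form and convexify the result with Schur complements.

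\emph{Step 1: worst case over the Kahan family.} Fix $\boldsymbol{\omega}\in\Delta^M$ and set $\mathbf{Y}_\omega:=\sum_b\boldsymbol{\omega}_b\mathbf{Y}_b$. If $\mathbf{Y}_\omega\succ\zeros$, the set $\Pcal_{\mathrm{KF}}(\boldsymbol{\omega})$ has the largest element $\mathbf{Y}_\omega^{-1}$. Since $\mathbf{K}\mathbf{C}\mathbf{P}\mathbf{C}^\top\mathbf{K}^\top$ is monotone in $\mathbf{P}$, the constraint reduces to $\mathbf{B}\succeq\mathbf{K}(\mathbf{R}+\mathbf{C}\mathbf{Y}_\omega^{-1}\mathbf{C}^\top)\mathbf{K}^\top$. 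If $\mathbf{Y}_\omega$ is singular, $\mathbf{P}$ can grow unboundedly along $\ker\mathbf{Y}_\omega$. The constraint then forces $\mathbf{K}\mathbf{C}v=\zeros$ for all $v\in\ker\mathbf{Y}_\omega$, and on the complement it is governed by $\mathbf{Y}_\omega^+$. I would handle both cases together by treating the effective noise as $\mathbf{R}+\mathbf{C}\mathbf{Y}_\omega^{+}\mathbf{C}^\top$ together with the range restriction. These are exactly the limits of $\mathbf{P}=(\mathbf{Y}_\omega+\epsilon\mathbf{I})^{-1}$ as $\epsilon\to0$.

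\emph{Step 2: optimal gain for fixed $\boldsymbol{\omega}$.} With $\mathbf{R}\succ\zeros$, I would work with the information form. By the matrix inversion lemma, with $\mathbf{S}:=\mathbf{Y}_\omega+\mathbf{C}^\top\mathbf{R}^{-1}\mathbf{C}$,
\[
\left(\mathbf{R}+\mathbf{C}\mathbf{Y}_\omega^{-1}\mathbf{C}^\top\right)^{-1}=\mathbf{R}^{-1}-\mathbf{R}^{-1}\mathbf{C}\mathbf{S}^{-1}\mathbf{C}^\top\mathbf{R}^{-1}=:\boldsymbol{\Sigma}_\omega .
\]
In the singular case the pseudoinverse $\mathbf{S}^+$ replaces $\mathbf{S}^{-1}$, obtained via the same $\epsilon$-limit. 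The best linear unbiased estimator for an error covariance with information matrix $\boldsymbol{\Sigma}_\omega$ is $\mathbf{K}=(\mathbf{H}^\top\boldsymbol{\Sigma}_\omega\mathbf{H})^{-1}\mathbf{H}^\top\boldsymbol{\Sigma}_\omega$ (Gauss--Markov). It attains the minimal bound $(\mathbf{H}^\top\boldsymbol{\Sigma}_\omega\mathbf{H})^{-1}$ in the Loewner order, and this bound is below every feasible $\mathbf{B}$ for that $\boldsymbol{\omega}$. Assumption~\ref{ass:J} then lets us replace the problem by minimizing $J(\mathbf{B})$ subject to $\mathbf{B}\succeq(\mathbf{H}^\top\boldsymbol{\Sigma}_\omega\mathbf{H})^{-1}$. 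Substituting $\boldsymbol{\omega}^\star$ gives exactly $\mathbf{K}^\star$.

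\emph{Step 3: LMI reformulation.} Let $\mathbf{U}$ be an upper bound for $\mathbf{H}^\top\mathbf{R}^{-1}\mathbf{C}\mathbf{S}^{+}\mathbf{C}^\top\mathbf{R}^{-1}\mathbf{H}$. The generalized Schur complement (range condition plus pseudoinverse) shows that this bound is equivalent to the second LMI in \eqref{eq:OCI_Kahan_optimal_Rdp}. Since $\mathbf{S}$ is affine in $\boldsymbol{\omega}$, that LMI is jointly convex. Then $\mathbf{H}^\top\boldsymbol{\Sigma}_\omega\mathbf{H}\succeq\mathbf{H}^\top\mathbf{R}^{-1}\mathbf{H}-\mathbf{U}$. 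The first LMI says $\mathbf{H}^\top\mathbf{R}^{-1}\mathbf{H}-\mathbf{U}\succ\zeros$ (needed for the $\mathbf{I}$ off-diagonal blocks) and $\mathbf{B}\succeq(\mathbf{H}^\top\mathbf{R}^{-1}\mathbf{H}-\mathbf{U})^{-1}$. Monotonicity of inversion, together with the fact that at the optimum $\mathbf{U}$ can be taken tight, shows that the two problems have the same optimal $(\mathbf{B},\boldsymbol{\omega})$.

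\emph{Feasibility.} The LMI problem is feasible iff some $\boldsymbol{\omega}$ makes $\mathbf{H}^\top\boldsymbol{\Sigma}_\omega\mathbf{H}$ nonsingular. The kernel of $\mathbf{S}$, and hence the range structure of $\boldsymbol{\Sigma}_\omega$, is smallest when every $\boldsymbol{\omega}_b>0$. For such $\boldsymbol{\omega}$, $\ker\mathbf{Y}_\omega=\ker\mathbf{W}$, so we may substitute $\mathbf{W}^\top\mathbf{W}$ for $\mathbf{Y}_\omega$ without changing the rank. Equivalence with feasibility of \eqref{eq:OCI_orig_prob} uses that $\Pcal$ is unbounded exactly along $\ker\mathbf{W}$, which forces the same constraint $\mathbf{K}\mathbf{C}v=\zeros$ for $v\in\ker\mathbf{W}$.

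\emph{Main obstacle.} I expect the main difficulty to be the singular case of $\mathbf{Y}_\omega$ and $\mathbf{S}$: justifying the pseudoinverse formulas, the range conditions in the generalized Schur complement, and the rank-equivalence claim. I would first try the $\epsilon$-regularization limit, using Lemma~\ref{lem:kahan_conservative} for the inclusion direction.
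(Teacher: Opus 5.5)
Your proposal is correct in outline, but it takes a genuinely different route from the paper.

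\textbf{The paper's route.} The paper imports this statement from its companion reference. Its appendix proof of the $\mathbf{R}=\zeros$ analogue (Theorem~\ref{th:OCI_Kahan_optimal_Rz}) shows the mechanism. For an \emph{arbitrary} admissible set $\Pcal$, the robust problem \eqref{eq:OCI_orig_prob} is first shown to be equivalent to an information-form problem in $(\mathbf{B},\mathbf{Y})$ with the constraint $\mathbf{Y}\preceq\mathbf{P}^{-1}$ for all $\mathbf{P}\in\Pcal$. This is Lemma~\ref{lem:equiv_OCI_prob_1} and its $\mathbf{R}\succ\zeros$ counterpart in the companion reference. There the gain is written as a function of $\mathbf{Y}$, and a converse point $\mathbf{Y}^\bullet$ is built from any feasible gain. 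Everything is done with exact spectral factorizations and Proposition~\ref{prop:schur}, with no regularization. The Kahan family enters only at the end, by fixing $\boldsymbol{\omega}$ and taking $\mathbf{Y}=\sum_b\boldsymbol{\omega}_b\mathbf{Y}_b$.

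\textbf{Your route and what it buys.} You use the specific structure of $\Pcal_{\mathrm{KF}}(\boldsymbol{\omega})$: it has a (limiting) Loewner-maximal element. This collapses the semi-infinite constraint to a single worst-case covariance, and you then eliminate $\mathbf{K}$ by Gauss--Markov and Woodbury. The argument is more elementary and gives a transparent interpretation: the family-optimal gain is the BLUE against the worst member of the family. It also characterizes exactly the $\mathbf{B}$-projection of the feasible set for fixed $\boldsymbol{\omega}$, namely $\{\mathbf{B}\succeq(\mathbf{H}^\top\boldsymbol{\Sigma}_\omega\mathbf{H})^{-1}\}$.

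\textbf{Its cost.} It does not extend to admissible sets without a maximal element, and the singular case needs an honest limit argument, which does go through:
\begin{itemize}
\item The lower bound comes from $\mathbf{P}_\epsilon=(\mathbf{Y}_\omega+\epsilon\mathbf{I})^{-1}\in\Pcal_{\mathrm{KF}}(\boldsymbol{\omega})$ plus closedness of the constraint.
\item Achievability needs $\mathbf{K}_\epsilon\to\mathbf{K}_0$. This uses $\mathbf{H}^\top\boldsymbol{\Sigma}_\omega\mathbf{H}\succ\zeros$, and $\ker\mathbf{S}\subseteq\ker\mathbf{C}$ so that $\mathbf{C}(\mathbf{S}+\epsilon\mathbf{I})^{-1}\mathbf{C}^\top\to\mathbf{C}\mathbf{S}^+\mathbf{C}^\top$.
\item The range restriction in the limit follows from $\mathbf{K}_\epsilon\mathbf{C}\mathbf{V}_\perp=O(\sqrt{\epsilon})$, where $\mathbf{V}_\perp$ spans $\ker\mathbf{Y}_\omega$.
\end{itemize}

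\textbf{One step to fix in the feasibility paragraph.} The rank of $\mathbf{H}^\top\boldsymbol{\Sigma}_\omega\mathbf{H}$ is not controlled by $\ker\mathbf{S}=\ker\mathbf{Y}_\omega\cap\ker\mathbf{C}$. With $\mathbf{C}=\mathbf{I}$ (the SCI case), $\mathbf{S}$ is invertible for every $\boldsymbol{\omega}$, yet $\boldsymbol{\Sigma}_\omega$ is singular whenever $\mathbf{Y}_\omega$ is. The fact you actually need is
$\ker\boldsymbol{\Sigma}_\omega=\mathbf{C}\ker\mathbf{Y}_\omega$.
One inclusion is immediate: for $v\in\ker\mathbf{Y}_\omega$ we have $\mathbf{S}v=\mathbf{C}^\top\mathbf{R}^{-1}\mathbf{C}v$, hence $\boldsymbol{\Sigma}_\omega\mathbf{C}v=\zeros$. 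The reverse inclusion also has to be shown.

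With this fact, $\mathbf{H}^\top\boldsymbol{\Sigma}_\omega\mathbf{H}\succ\zeros$ holds for some $\boldsymbol{\omega}$ if and only if it holds for $\boldsymbol{\omega}_b>0$, where $\ker\mathbf{Y}_\omega=\ker\mathbf{W}$. That in turn holds if and only if $\mathbf{H}$ is injective and $\mathrm{range}(\mathbf{H})\cap\mathbf{C}\ker\mathbf{W}=\{0\}$. This is exactly the condition for a left inverse $\mathbf{K}$ of $\mathbf{H}$ with $\mathbf{K}\mathbf{C}\ker\mathbf{W}=\{0\}$ to exist. That existence statement is what your unboundedness-along-$\ker\mathbf{W}$ argument needs to close the equivalence with feasibility of \eqref{eq:OCI_orig_prob}; Lemma~\ref{lem:kahan_conservative} covers the other direction.
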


	Notice that, for the case $\mathbf{R} = \zeros$, one can rewrite \eqref{eq:E_ee} as $\EV[\mathbf{e}\mathbf{e}^\top] = \mathbf{C}\mathbf{P}\mathbf{C}^\top$. Therefore, $\EV[\mathbf{e}\mathbf{e}^\top]$ is made up uniquely of $o$ components of $\mathbf{P}\in S_{++}^m$ extracted by $\mathbf{C} \in \R^{o\times m}$. As a result, without any loss of generality, one can consider that $o = m$ and that $\mathbf{C}$ is full rank.
	
	
	\begin{theorem}\label{th:OCI_Kahan_optimal_Rz}
		Under Assumption~\ref{ass:J}, if $\mathbf{R} = \zeros$, the pair $(\mathbf{K}^\star,\mathbf{B}^\star)$ is a Kahan-family-optimal solution to the OCI problem \eqref{eq:OCI_orig_prob}, where $	(\mathbf{B}^\star, \boldsymbol{\omega}^\star)\in$
		\begin{equation}\label{eq:OCI_Kahan_optimal_Rz}
				\begin{aligned}
					 & \!\!\!\!\underset{\substack{\mathbf{B}\in S_+^n,\boldsymbol{\omega} \in \Delta^M}}{\argmin} &\quad \!\!\!\!\!&\!\!\! J(\mathbf{B})\\
						&\!\!\!\!\quad\mathrm{s.t.} && \!\!\!\!\!\begin{bmatrix} \mathbf{B}&  \mathbf{I} \\ \mathbf{I} & \mathbf{H}^\top \mathbf{C}^{-\!\top}\!\left( \sum\nolimits_{b=1}^M\!\boldsymbol{\omega}_b\mathbf{Y}_b\right)\! \mathbf{C}^{-1} \mathbf{H}\end{bmatrix} \! \succeq \zeros\!\!
					\end{aligned}
			\end{equation}
		and \vspace{-0.3cm}
		\begin{equation*}
				\begin{split}
						\mathbf{K}^\star \! := \!&\left(\mathbf{H}^\top \mathbf{C}^{-\top}\left( \sum\nolimits_{b=1}^M\!\boldsymbol{\omega}^\star_b\mathbf{Y}_b\right) \mathbf{C}^{-1} \mathbf{H}\right)^{\!\!\!-\!1}\\
						& \quad\!\mathbf{H}^\top \mathbf{C}^{-\top}\left( \sum\nolimits_{b=1}^M\!\boldsymbol{\omega}^\star_b\mathbf{Y}_b\right) \mathbf{C}^{-1}.
					\end{split}
			\end{equation*}
		Furthermore, \eqref{eq:OCI_Kahan_optimal_Rz} is feasible if and only if $\mathbf{H}^\top \mathbf{C}^{-\top}\mathbf{W}^\top\mathbf{W}\mathbf{C}^{-1} \mathbf{H}$ is full rank or, equivalently, the OCI problem \eqref{eq:OCI_orig_prob} is feasible, where $\mathbf{W} := [\mathbf{W}_1^\top \; \mathbf{W}_2^\top \; \cdots \; \mathbf{W}_M^\top]^\top$.
	\end{theorem}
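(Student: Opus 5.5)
The plan is to reduce \eqref{eq:OCI_kahan} with $\mathbf{R}=\zeros$ to a generalized least-squares problem for each fixed $\boldsymbol{\omega}$. That problem has a closed-form optimal gain, and its optimal bound can then be encoded with a Schur complement. Write $\boldsymbol{\Omega}(\boldsymbol{\omega}) := \sum_{b=1}^M \boldsymbol{\omega}_b\mathbf{Y}_b \succeq \zeros$ and $\mathbf{G}(\boldsymbol{\omega}) := \mathbf{C}^{-\top}\boldsymbol{\Omega}(\boldsymbol{\omega})\mathbf{C}^{-1}$. This is legitimate because, as remarked before the statement, we may take $o=m$ and $\mathbf{C}$ invertible.

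\textbf{Step 1: the inner sup for fixed $\boldsymbol{\omega}$.} Fix $\boldsymbol{\omega}$ and $\mathbf{K}$. If $\boldsymbol{\Omega}\succ\zeros$, then every $\mathbf{P}\in\Pcal_{\mathrm{KF}}(\boldsymbol{\omega})$ satisfies $\mathbf{P}\preceq\boldsymbol{\Omega}^{-1}$, with equality attained. Hence the robust constraint is equivalent to $\mathbf{B}\succeq \mathbf{K}\mathbf{C}\boldsymbol{\Omega}^{-1}\mathbf{C}^\top\mathbf{K}^\top = \mathbf{K}\mathbf{G}^{-1}\mathbf{K}^\top$.

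If $\boldsymbol{\Omega}$ is singular, the matrices $\mathbf{P}_\epsilon=(\boldsymbol{\Omega}+\epsilon\mathbf{I})^{-1}$ lie in the family and blow up along $\ker\boldsymbol{\Omega}$. Boundedness therefore forces $\mathbf{K}\mathbf{C}\,\ker\boldsymbol{\Omega}=\{\zeros\}$, i.e., $\ker\mathbf{G}\subseteq\ker\mathbf{K}$. Under that condition the constraint becomes $\mathbf{B}\succeq\mathbf{K}\mathbf{G}^{+}\mathbf{K}^\top$.

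\textbf{Step 2: optimization over $\mathbf{K}$.} Minimizing $\mathbf{K}\mathbf{G}^{+}\mathbf{K}^\top$ in the Loewner order subject to $\mathbf{K}\mathbf{H}=\mathbf{I}$ and $\ker\mathbf{G}\subseteq\ker\mathbf{K}$ is a Gauss--Markov problem. I would write $\mathbf{K}=\mathbf{K}^\star+\boldsymbol{\Delta}$ with $\boldsymbol{\Delta}\mathbf{H}=\zeros$ and check that the cross term $\mathbf{K}^\star\mathbf{G}^{+}\boldsymbol{\Delta}^\top$ vanishes. This uses $\mathbf{K}^\star\mathbf{G}^+ = (\mathbf{H}^\top\mathbf{G}\mathbf{H})^{-1}\mathbf{H}^\top\mathbf{G}\mathbf{G}^+$ and the fact that $\boldsymbol{\Delta}$ annihilates $\ker\mathbf{G}$. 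It follows that $\mathbf{K}\mathbf{G}^+\mathbf{K}^\top\succeq(\mathbf{H}^\top\mathbf{G}\mathbf{H})^{-1}$, with equality at $\mathbf{K}^\star$ as in the statement. This requires $\mathbf{H}^\top\mathbf{G}\mathbf{H}$ invertible. If it is singular, choose $\mathbf{v}\neq\zeros$ with $\mathbf{G}\mathbf{H}\mathbf{v}=\zeros$; then $\mathbf{K}\mathbf{H}\mathbf{v}=\mathbf{v}\neq\zeros$ contradicts $\ker\mathbf{G}\subseteq\ker\mathbf{K}$, so no feasible $\mathbf{K}$ exists for that $\boldsymbol{\omega}$. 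I expect this step, especially the bookkeeping in the singular-$\boldsymbol{\Omega}$ case, to be the main technical obstacle.

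\textbf{Step 3: LMI and equivalence of problems.} I would then show that $\begin{bmatrix}\mathbf{B}&\mathbf{I}\\ \mathbf{I}&\mathbf{S}\end{bmatrix}\succeq\zeros$ with $\mathbf{S}=\mathbf{H}^\top\mathbf{G}\mathbf{H}$ is equivalent to $\mathbf{S}\succ\zeros$ together with $\mathbf{B}\succeq\mathbf{S}^{-1}$. For necessity of invertibility, if $\mathbf{S}\mathbf{v}=\zeros$ with $\mathbf{v}\neq\zeros$, test the LMI on $[-t\mathbf{v};\mathbf{v}]$ for small $t>0$: the quadratic form is $t^2\mathbf{v}^\top\mathbf{B}\mathbf{v}-2t\|\mathbf{v}\|^2<0$. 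By Steps 1--2, the set of $(\mathbf{B},\boldsymbol{\omega})$ feasible for \eqref{eq:OCI_kahan} (for some $\mathbf{K}$) coincides with the feasible set of \eqref{eq:OCI_Kahan_optimal_Rz}, and $\mathbf{K}^\star$ is always an admissible gain. Hence a minimizer $(\mathbf{B}^\star,\boldsymbol{\omega}^\star)$ of \eqref{eq:OCI_Kahan_optimal_Rz} together with $\mathbf{K}^\star$ solves \eqref{eq:OCI_kahan}. This holds for any $J$; Assumption~\ref{ass:J} is only needed to guarantee that optimal $\mathbf{B}^\star$ can be taken tight, $\mathbf{B}^\star=\mathbf{S}^{-1}$.

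\textbf{Step 4: feasibility.} For $\boldsymbol{\omega}$ in the relative interior of $\Delta^M$ we have $\ker\boldsymbol{\Omega}=\ker\mathbf{W}^\top\mathbf{W}$, and for every $\boldsymbol{\omega}$ we have $\ker\boldsymbol{\Omega}\supseteq\ker\mathbf{W}$. So \eqref{eq:OCI_Kahan_optimal_Rz} is feasible if and only if $\mathbf{H}^\top\mathbf{C}^{-\top}\mathbf{W}^\top\mathbf{W}\mathbf{C}^{-1}\mathbf{H}$ is full rank.

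By Lemma~\ref{lem:kahan_conservative}, Kahan-family feasibility implies feasibility of \eqref{eq:OCI_orig_prob}. Conversely, if that matrix is singular, take $\mathbf{v}\neq\zeros$ with $\mathbf{W}\mathbf{y}=\zeros$, where $\mathbf{y}:=\mathbf{C}^{-1}\mathbf{H}\mathbf{v}$. Then $\mathbf{P}_0+t\mathbf{y}\mathbf{y}^\top\in\Pcal$ for all $t\ge0$ and any $\mathbf{P}_0\in\Pcal$. For any $\mathbf{K}$ with $\mathbf{K}\mathbf{H}=\mathbf{I}$ we have $\mathbf{K}\mathbf{C}\mathbf{y}=\mathbf{v}\neq\zeros$, so the error second moment $\mathbf{K}\mathbf{C}(\mathbf{P}_0+t\mathbf{y}\mathbf{y}^\top)\mathbf{C}^\top\mathbf{K}^\top$ is unbounded as $t\to\infty$. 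No finite $\mathbf{B}$ exists, and \eqref{eq:OCI_orig_prob} is infeasible.
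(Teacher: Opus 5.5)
Your proposal is correct, but it takes a more direct route than the paper.

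\textbf{The paper's route.} The paper first proves an equivalence lemma that holds for an \emph{arbitrary} admissible set $\Pcal$. With $\mathbf{R}=\zeros$, problem \eqref{eq:OCI_orig_prob} is shown equivalent to a lifted problem in $(\mathbf{B},\mathbf{Y})$, in which the gain is eliminated and the robust constraint becomes $\mathbf{Y}\preceq\mathbf{P}^{-1}$ for all $\mathbf{P}\in\Pcal$. Feasible points are mapped in both directions:
\begin{itemize}
\item a spectral factorization of $\mathbf{Y}$ together with Proposition~\ref{prop:schur}(iii) gives $\mathbf{Y}\mathbf{P}\mathbf{Y}\preceq\mathbf{Y}$;
\item conversely, $\mathbf{Y}^\bullet=(\mathbf{K}^\circ\mathbf{C})^\top\mathbf{B}^{\circ+}\mathbf{K}^\circ\mathbf{C}$ is built using a lemma from the authors' earlier OCI paper.
\end{itemize}
The theorem then follows by fixing $\boldsymbol{\omega}$ and taking $\mathbf{Y}=\sum_b\boldsymbol{\omega}_b\mathbf{Y}_b$.

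\textbf{Your route.} You use the fact that the Kahan family has an explicit worst case, approached by $(\boldsymbol{\Omega}+\epsilon\mathbf{I})^{-1}$. This lets you compute the inner supremum outright and then optimize the gain by a Gauss--Markov decomposition.

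\textbf{What each route buys.} Your argument has several advantages:
\begin{itemize}
\item It is self-contained and shows that $\mathbf{K}^\star$ is Loewner-optimal for every fixed $\boldsymbol{\omega}$.
\item The $(\mathbf{B},\boldsymbol{\omega})$-feasible sets of \eqref{eq:OCI_kahan} and \eqref{eq:OCI_Kahan_optimal_Rz} coincide exactly, so the optimality claim holds for any $J$. The paper's equivalence lemma, by contrast, invokes Assumption~\ref{ass:J}.
\item It makes transparent why $\sum_b\boldsymbol{\omega}_b\mathbf{Y}_b$ is the right information matrix. The paper leaves this implicit: it does not spell out that $\mathbf{Y}\preceq\mathbf{P}^{-1}$ on the family forces $\mathbf{Y}\preceq\boldsymbol{\Omega}$.
\item Your perturbation $\mathbf{P}_0+t\mathbf{y}\mathbf{y}^\top$ gives a rigorous replacement for the paper's rather informal column-space argument that \eqref{eq:OCI_orig_prob} is infeasible when the rank condition fails.
\end{itemize}
The paper's route, in turn, yields a reformulation of the exact (non-family) OCI problem valid for any $\Pcal$, paralleling the $\mathbf{R}\succ\zeros$ case. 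Your argument does not provide this, because it relies on the family having an explicit worst-case element.

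\textbf{Details to fill in.} When you write up Step~1 for singular $\boldsymbol{\Omega}$, make two points explicit:
\begin{itemize}
\item $\boldsymbol{\Omega}\mathbf{P}\boldsymbol{\Omega}\preceq\boldsymbol{\Omega}$ whenever $\mathbf{P}^{-1}\succeq\boldsymbol{\Omega}$. This is the analogue of \eqref{eq:star_ineq2}, and it gives the upper bound for all members of the family.
\item The limit $\mathbf{K}\mathbf{C}\boldsymbol{\Omega}^{+}\mathbf{C}^\top\mathbf{K}^\top$ equals $\mathbf{K}\mathbf{G}^{+}\mathbf{K}^\top$ only because $\mathbf{K}=\mathbf{L}\mathbf{G}$ for some $\mathbf{L}$. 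The matrix $\mathbf{C}\boldsymbol{\Omega}^{+}\mathbf{C}^\top$ is a generalized inverse of $\mathbf{G}$, but not in general its Moore--Penrose inverse.
\end{itemize}
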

	\begin{proof}
		See Appendix~\ref{sec:proof_th_OCI_Kahan_optimal_Rz}.
	\end{proof}
	
	Notice that the feasible sets of \eqref{eq:OCI_Kahan_optimal_Rdp} and \eqref{eq:OCI_Kahan_optimal_Rz} are convex, since they are characterized by linear matrix inequalities. Therefore, if $J$ is convex, then  \eqref{eq:OCI_Kahan_optimal_Rdp} and \eqref{eq:OCI_Kahan_optimal_Rz} are convex optimization problems, which can be efficiently solved with global optimality guarantees and are robust to changes in input parameters~\cite{BoydVandenberghe2004}. Crucially, for the typical choices of $J$ such as the trace or determinant, the optimization problems  \eqref{eq:OCI_Kahan_optimal_Rdp} and \eqref{eq:OCI_Kahan_optimal_Rz} can be written as a SDP, for which well-performing off-the-shelf solvers with polynomial worst-case complexity exist \cite{VandenbergheBoyd1996}\cite[Section~6]{MOSEK2024}. 
	
	\begin{remark}
		To use the determinant as the criterion $J$, the optimization problems need to be slightly modified to be cast as a SDP. Since $\det(\mathbf{X}^{-1}) = \det(\mathbf{X})^{-1}$ and the logarithm is strictly increasing one should modify the objective functions of \eqref{eq:OCI_Kahan_optimal_Rdp} and \eqref{eq:OCI_Kahan_optimal_Rz} to $-\mathrm{logdet}(\mathbf{H}^\top\mathbf{R}^{-1}\mathbf{H}-\mathbf{U})$ and $-\mathrm{logdet}(\mathbf{H}^\top \mathbf{C}^{-\!\top}\!( \sum_{b=1}^M\!\boldsymbol{\omega}_b\mathbf{Y}_b) \mathbf{C}^{-1} \mathbf{H})$, respectively. See \cite[Section~6.2.3]{MOSEK2024} for details on this technique.
	\end{remark}

\section{Covariance Intersection}\label{sec:CI}

In this section, we consider a general framework of the standard CI problem introduced in \cite{Julier1997,JulierUhlmann2017}. The goal is to estimate a state  $\mathbf{x} \in \R^n$. For that purpose, there are $N$ pieces of partial information characterized by $\mathbf{z}_i = \mathbf{H}_i\mathbf{x}+\mathbf{e}_i$ with $i = 1,2,\ldots,N$, where $\mathbf{H}_i \in \R^{o_i\times n}$ is known and $\mathbf{e}_i$ is zero-mean random noise. Similarly to the OCI problem in Section~\ref{sec:oci_probelm}, one can define $\mathbf{z} := [\mathbf{z}_1^\top,\ldots,\mathbf{z}_N^\top]^\top \in \R^o$, $\mathbf{H} := [\mathbf{H}_1^\top \;\; \cdots \;\; \mathbf{H}_N^\top]^\top \in \R^{o\times n}$, and $\mathbf{e} := [\mathbf{e}_1^\top,\ldots,\mathbf{e}_N^\top]^\top \in \R^o$ to characterize the information of the partial estimates as $\mathbf{z} = \mathbf{H}\mathbf{x} + \mathbf{e}$. Moreover,  we know a bound $\mathbf{X}_i \succ \zeros$ on the autocorrelations, i.e., $\EV[\mathbf{e}_i\mathbf{e}_i^\top] \preceq \mathbf{X}_i$ for all $i = 1,2,\ldots,N$, but the correlations $\EV[\mathbf{e}_i\mathbf{e}_j^\top]$ are unknown for all  $i\neq j$. The goal is to design a fusion gain $\mathbf{K}\in \R^{n\times o}$ for an unbiased estimator $\hat{\mathbf{x}} = \mathbf{K}\mathbf{z}$ such that a criterion $J(\mathbf{B})$ of a bound on the fused covariance $\mathbf{B} \succeq \mathbf{K} \EV[\mathbf{e}\mathbf{e}^\top]\mathbf{K}^\top$ is minimized, where $J$ satisfies Assumption~\ref{ass:J}.

This problem can be cast in the form of an OCI problem, described in Section~\ref{sec:oci}. Specifically, $\EV[\mathbf{e}\mathbf{e}^\top]$ is not fully known, but it has the form $\EV[\mathbf{e}\mathbf{e}^\top] = \mathbf{P}$, where $\mathbf{P}$ is in the set of admissible matrices
\begin{equation*}
	\Pcal = \{\mathbf{P}\in S_{++}^o : \mathbf{W}_i\mathbf{P}\mathbf{W}_i^\top \preceq \mathbf{X}_i \; \forall i \in \{1,2,\ldots, N\}\},
\end{equation*} 
where $\mathbf{W}_i = [\zeros_{o_i\times o_1} \,  \cdots \,  \zeros_{o_i\times o_{i-1}}\;  \mathbf{I}_{o_i}  \;  \zeros_{o_i\times o_{i+1}}\cdots \zeros_{o_i\times o_N}]$ for $i = 1,2,\ldots,N$. Notice that this is the same form as \eqref{eq:E_ee} with $\mathbf{R} = \zeros$ and $\mathbf{C}= \mathbf{I}$. Therefore, applying Theorem~\ref{th:OCI_Kahan_optimal_Rz} immediately allows to characterize the Kahan-family-optimal solution to the CI problem resorting to the following result.

\begin{theorem}\label{th:CI}
	Under Assumption~\ref{ass:J}, the Kahan-family-optimal solution to the CI problem is $(\mathbf{K}^\star,\mathbf{B}^\star)$, where $(\mathbf{B}^\star, \boldsymbol{\omega}^\star)\in$
	\begin{equation}\label{eq:CI_sdp}
		\begin{aligned}
			 \;\;& \underset{\substack{\mathbf{B}\in S_+^n,\boldsymbol{\omega} \in \Delta^N}}{\argmin} &\quad \!\!\!\!\!&J(\mathbf{B})\\
			&\quad\mathrm{s.t.} && \!\!\begin{bmatrix} \mathbf{B}&  \mathbf{I} \\ \mathbf{I} & \sum_{i = 1}^{N} \boldsymbol{\omega}_i \mathbf{H}_i^\top \mathbf{X}_i^{-1}\mathbf{H}_i \end{bmatrix}  \succeq \zeros
		\end{aligned}
	\end{equation}
	and \vspace{-0.3cm}
	\begin{equation*}
		\begin{split} 
			\mathbf{K}^\star \! := \!&\left(\sum_{i = 1}^{N} \!\boldsymbol{\omega}_i^\star \mathbf{H}_i^\top \mathbf{X}_i^{-1}\mathbf{H}_i\!\right)^{\!\!\!-\!1}\!\!\!\!\left[\boldsymbol{\omega}_1^\star \mathbf{H}_1^\top \mathbf{X}_1^{-1}\, \cdots \; \boldsymbol{\omega}_N^\star \mathbf{H}_N^\top \mathbf{X}_N^{-1} \right]\!.
		\end{split}
	\end{equation*}
	Furthermore, both the CI problem and \eqref{eq:CI_sdp} are feasible if and only if $\mathbf{H}$ is full column rank.
\end{theorem}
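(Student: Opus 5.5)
The plan is to specialize Theorem~\ref{th:OCI_Kahan_optimal_Rz} to the CI setting. As noted before the statement, the CI problem is an instance of the OCI problem \eqref{eq:OCI_orig_prob} with $\mathbf{R}=\zeros$, $\mathbf{C}=\mathbf{I}_o$ (so $m=o$ and $\mathbf{C}$ is trivially full rank and invertible), $M=N$, and $\mathbf{W}_b=\mathbf{W}_i$ the block-selection matrices. Since $\EV[\mathbf{e}\mathbf{e}^\top]=\mathbf{P}$ ranges over exactly the set $\Pcal$ in \eqref{eq:Pcal_def}, the CI problem and its OCI embedding have the same feasible pairs $(\mathbf{K},\mathbf{B})$. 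Their Kahan-family versions \eqref{eq:OCI_kahan} therefore coincide as well, and Theorem~\ref{th:OCI_Kahan_optimal_Rz} applies directly under Assumption~\ref{ass:J}.

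The only computation is to evaluate the quantities in Theorem~\ref{th:OCI_Kahan_optimal_Rz} for this data. With $\mathbf{C}=\mathbf{I}$ we have $\mathbf{Y}_i=\mathbf{W}_i^\top\mathbf{X}_i^{-1}\mathbf{W}_i$, which is the block-diagonal matrix with $\mathbf{X}_i^{-1}$ in the $i$-th diagonal block and zeros elsewhere. Hence $\sum_i\boldsymbol{\omega}_i\mathbf{Y}_i=\diag(\boldsymbol{\omega}_1\mathbf{X}_1^{-1},\ldots,\boldsymbol{\omega}_N\mathbf{X}_N^{-1})$. Multiplying by the block rows of $\mathbf{H}$ gives
\[
\mathbf{H}^\top\Big(\textstyle\sum_i\boldsymbol{\omega}_i\mathbf{Y}_i\Big)\mathbf{H}=\sum_{i=1}^N\boldsymbol{\omega}_i\mathbf{H}_i^\top\mathbf{X}_i^{-1}\mathbf{H}_i
\]
and
\[
\mathbf{H}^\top\Big(\textstyle\sum_i\boldsymbol{\omega}_i\mathbf{Y}_i\Big)=[\boldsymbol{\omega}_1\mathbf{H}_1^\top\mathbf{X}_1^{-1}\,\cdots\,\boldsymbol{\omega}_N\mathbf{H}_N^\top\mathbf{X}_N^{-1}].
\]
Substituting these into \eqref{eq:OCI_Kahan_optimal_Rz} and into the expression for $\mathbf{K}^\star$ yields \eqref{eq:CI_sdp} and the stated gain.

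For feasibility, Theorem~\ref{th:OCI_Kahan_optimal_Rz} states that both problems are feasible if and only if $\mathbf{H}^\top\mathbf{W}^\top\mathbf{W}\mathbf{H}$ is full rank. Stacking the $\mathbf{W}_i$ gives $\mathbf{W}=\mathbf{I}_o$, because together they select every block exactly once (up to the block ordering, which is the identity here). The condition therefore reduces to $\mathbf{H}^\top\mathbf{H}$ being nonsingular, which is equivalent to $\mathbf{H}$ having full column rank.

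I expect no real obstacle. The one point that needs care is confirming that the CI problem's worst-case constraint over unknown cross-correlations is exactly $\mathbf{P}\in\Pcal$. In particular, $\Pcal$ requires $\mathbf{P}\succ\zeros$ while a true second moment might only be semidefinite. I would argue this via continuity: any semidefinite admissible $\mathbf{P}$ is a limit of definite admissible ones, for example $(1-\epsilon)\mathbf{P}+\epsilon\,\diag(\mathbf{X}_1,\ldots,\mathbf{X}_N)$. The constraint $\mathbf{B}\succeq\mathbf{K}\mathbf{P}\mathbf{K}^\top$ is closed under such limits, so the two formulations yield the same constraint set.
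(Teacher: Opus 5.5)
Your proposal is correct and takes the same route as the paper, which obtains Theorem~\ref{th:CI} by specializing Theorem~\ref{th:OCI_Kahan_optimal_Rz} with $\mathbf{R}=\zeros$, $\mathbf{C}=\mathbf{I}$, $M=N$ and block-selection $\mathbf{W}_i$; your evaluation of $\sum_i\boldsymbol{\omega}_i\mathbf{Y}_i$, the resulting LMI and gain, and the observation $\mathbf{W}=\mathbf{I}_o$ for the rank condition are exactly the substitutions needed. Your continuity argument, which reconciles merely semidefinite second moments with the definite set $\Pcal$, is a valid extra detail that the paper takes for granted.
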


\begin{mdframed}[style=callout]
	If common criteria for $J$ such as the trace or determinant are used, Theorem~\ref{th:CI} characterizes the Kahan-family-optimal solution of the CI problem as the solution to a SDP, which can be efficiently solved with polynomial worst-case complexity with off-the-shelf solvers.
\end{mdframed}

Interestingly, the Kahan family of bounding ellipsoids described in Section~\ref{sec:oci_sol} is a generalization of the family commonly used in the CI literature \cite{Julier1997,JulierUhlmann2017}. Therefore, if Theorem~\ref{th:CI} is particularized to a case of multiple full state estimates, the exact same filters as \cite{JulierUhlmann2017} will be obtained. An implementation of the solution to the CI problem using a SDP and its comparison with the state-of-the-art method is available in an open-source repository at \href{https://github.com/decenter2021/unification-CI}{github.com/decenter2021/unification-CI}.

\begin{remark}
	Other CI-like problems can be easily cast in the OCI framework using a procedure very similar ot the one of this section. An example is fusion with unequal state vectors \cite{SijsHanebeckEtAl2013,NoackSijsEtAl2014}.
\end{remark}

\section{Split Covariance Intersection}\label{sec:SCI}

The SCI method \cite{LiNashashibiEtAl2013} was recently generalized in \cite{CrosAmblardEtAl2025} to deal with known correlated components instead of known uncorrelated components. In this section, we show that the unifying framework proposed in this paper allows to efficiently treat the generalized SCI problem.

The goal is to estimate a state  $\mathbf{x} \in \R^n$. For that purpose, there are $N$ pieces of partial information characterized by $\mathbf{z}_i = \mathbf{H}_i\mathbf{x}+\mathbf{e}_i \in \R^{o_i}$ with $i = 1,2,\ldots,N$, where $\mathbf{H}_i \in \R^{o_i\times n}$ is known. In this case, we known that $\mathbf{e}_i$ is zero-mean random noise that can be split into two zero-mean components $\mathbf{e}_i^{(1)}$ and $\mathbf{e}_i^{(2)}$ as $\mathbf{e}_i = \mathbf{e}_i^{(1)} + \mathbf{e}_i^{(2)}$.  One can define $\mathbf{z} := [\mathbf{z}_1^\top,\ldots,\mathbf{z}_N^\top]^\top \in \R^o$, $\mathbf{H} := [\mathbf{H}_1^\top \;\; \cdots \;\; \mathbf{H}_N^\top]^\top \in \R^{o\times n}$, $\mathbf{e}^{(1)} := [\mathbf{e}_1^{(1)\top},\ldots,\mathbf{e}_N^{(1)\top}]^\top \in \R^o$, $\mathbf{e}^{(2)} := [\mathbf{e}_1^{(2)\top},\ldots,\mathbf{e}_N^{(2)\top}]^\top \in \R^o$, and $\mathbf{e} :=  \mathbf{e}^{(1)}+ \mathbf{e}^{(2)}$ to characterize the information of the partial estimates as $\mathbf{z} = \mathbf{H}\mathbf{x} + \mathbf{e}^{(1)}+ \mathbf{e}^{(2)}$. We know a bound $\mathbf{X}^{(1)}_i \succ \zeros$ on the autocorrelation of the first component, i.e., $\EV[\mathbf{e}_i^{(1)}\mathbf{e}_i^{(1)\top}] \preceq \mathbf{X}_i^{(1)}$ for all $i = 1,2,\ldots,N$. Additionally, the second components have known correlations, i.e., $ \EV[\mathbf{e}^{(2)}\mathbf{e}^{(2)\top}] = \mathbf{X}^{(2)} \succ \zeros$ and $\EV[\mathbf{e}^{(1)}\mathbf{e}^{(2)\top}] = \zeros$.\footnote{We can also assume both components are correlated with a known correlation $\EV[\mathbf{e}^{(1)}\mathbf{e}^{(2)\top}] = \mathbf{X}^{(1,2)}$. Nevertheless, after a transformation, this case can be reduced to $\EV[\mathbf{e}^{(1)}\mathbf{e}^{(2)\top}] = \zeros$. See \cite[Section~III-B]{CrosAmblardEtAl2025} for details.} Notice that the correlations $\EV[\mathbf{e}_i^{(1)}\mathbf{e}_j^{(1)\top}]$ are unknown for all  $i\neq j$. The standard SCI problem in \cite{LiNashashibiEtAl2013} corresponds to the case when $\mathbf{X}^{(2)}$ is block diagonal. 

We are interested in designing a gain $\mathbf{K}\in \R^{n\times o}$ for an unbiased estimator $\hat{\mathbf{x}} = \mathbf{K}\mathbf{z}$. The estimation error is $\mathbf{\tilde{x}} := \mathbf{\hat{x}}-\mathbf{x}$. Since the filter is unbiased it follows that $\mathbf{K}\mathbf{H} = \mathbf{I}$. Thus $\mathbf{\tilde{x}} = \mathbf{K}\mathbf{e}^{(1)} + \mathbf{K}\mathbf{e}^{(2)}$, which can be written in two components as $\mathbf{\tilde{x}} = \tilde{\mathbf{e}}^{(1)} + \tilde{\mathbf{e}}^{(2)}$, where $\tilde{\mathbf{e}}^{(1)} = \mathbf{K}\mathbf{e}^{(1)}$ and  $\tilde{\mathbf{e}}^{(2)} = \mathbf{K}\mathbf{e}^{(2)}$. The goal is to design $\mathbf{K}$ such that a criterion $J(\mathbf{B})$ of a bound on the fused covariance $\mathbf{B} \succeq  \mathbf{K} \EV[\mathbf{e}\mathbf{e}^\top]\mathbf{K}^\top$ is minimized, where $J$ satisfies Assumption~\ref{ass:J}. Moreover, we are interested in characterizing a bound on the first component of the estimation error $\mathbf{B}^{(1)} \succeq \EV[\tilde{\mathbf{e}}^{(1)}\tilde{\mathbf{e}}^{(1)\top}]$ and characterizing the second component with  $\mathbf{B}^{(2)} =  \EV[\tilde{\mathbf{e}}^{(2)}\tilde{\mathbf{e}}^{(2)\top}]$. Note that it holds that $\mathbf{B} = \mathbf{B}^{(1)} + \mathbf{B}^{(2)}$ and $ \EV[\tilde{\mathbf{e}}^{(1)}\tilde{\mathbf{e}}^{(2)\top}]   = \zeros$.

This problem can be cast in the form of an OCI problem, described in Section~\ref{sec:oci}. Specifically, $\EV[\mathbf{e}\mathbf{e}^\top]$ is not fully known but has the form $\EV[\mathbf{e}\mathbf{e}^\top] = \mathbf{X}^{(2)}+ \mathbf{P}$, where $\mathbf{P}$ is in the set of admissible matrices
\begin{equation*}
	\Pcal = \{\mathbf{P}\in S_{++}^o : \mathbf{W}_i\mathbf{P}\mathbf{W}_i^\top \preceq \mathbf{X}^{(1)}_i \; \forall i \in \{1,2,\ldots, N\}\},
\end{equation*}
where $\mathbf{W}_i = [\zeros_{o_i\times o_1}  \, \cdots \,  \zeros_{o_i\times o_{i-1}}\;  \mathbf{I}_{o_i}  \;  \zeros_{o_i\times o_{i+1}}\cdots \zeros_{o_i\times o_N}]$ for $i = 1,2,\ldots,N$. Notice that this is the same form as \eqref{eq:E_ee} with $\mathbf{R} = \mathbf{X}^{(2)}$ and $\mathbf{C}= \mathbf{I}$.  Therefore, applying Theorem~\ref{th:OCI_Kahan_optimal_Rpd} allows to characterize the Kahan-family-optimal solution of the generalized SCI problem resorting to the following result.

\begin{theorem}\label{th:SCI}
	Under Assumption~\ref{ass:J}, the Kahan-family-optimal solution to the generalized SCI problem is $(\mathbf{K}^\star,\mathbf{B}^{\star(1)}, \mathbf{B}^{\star(2)})$, where $(\mathbf{U}^\star, \mathbf{B}^\star, \boldsymbol{\omega}^\star)\in$
	\begin{equation}\label{eq:SCI_sdp}
	\begin{aligned}
		 & \!\!\!\!\underset{\substack{\mathbf{U},\mathbf{B}\in S_+^n\\\boldsymbol{\omega} \in \Delta^N}}{\argmin} \!\!\!& \!\!\!\!\!\!\!&J(\mathbf{B})\\
		&\!\;\;\mathrm{s.t.} && \!\begin{bmatrix} \mathbf{B}&  \mathbf{I} \\ \mathbf{I} & \mathbf{H}^\top{\mathbf{X}^{(2)}}^{-1}\mathbf{H}-\mathbf{U}\end{bmatrix}  \succeq \zeros\\
		&  && \!\!\begin{bmatrix} \mathbf{U}&  \mathbf{H}^\top{\mathbf{X}^{(2)}}^{-1} \\ {\mathbf{X}^{(2)}}^{\!-1}\mathbf{H}\;\; & \sum\limits_{b=1}^M\!\boldsymbol{\omega}_b\mathbf{W}_i^\top{\mathbf{X}^{(1)}_i}^{\!-1}\!\!\mathbf{W}_i +\! {\mathbf{X}^{(2)}}^{-1}\end{bmatrix} \!\! \succeq \!\zeros,\!\!\!\!  
	\end{aligned}\!\!\!\!\!\!\!
\end{equation}
	\begin{equation*}
		\begin{split}
		 &\mathbf{B}^{\star(1)}  =  \mathbf{B} - \mathbf{K}^\star\mathbf{X}^{(2)} \mathbf{K}^{\star\top},\\ 	&\mathbf{B}^{\star(2)} =   \mathbf{K}^\star \mathbf{X}^{(2)} \mathbf{K}^{\star \top},\\
			&\mathbf{K}^\star  =  \left(\!\!\mathbf{H}^\top\!{\mathbf{X}^{(2)}}^{-1} \!\!\left(\!\!\mathbf{X}^{(2)} \!-\!\left( \mathbf{Y}^\star\!+\!{\mathbf{X}^{(2)}}^{-1} \right)^{\!\!\!-\!1}\right)\!\!{\mathbf{X}^{(2)}}^{-1} \mathbf{H}\right)^{\!\!\!-\!1}\\
			&\!\qquad\mathbf{H}^\top\!{\mathbf{X}^{(2)}}^{-1} \!\!\left(\!\!\mathbf{X}^{(2)} \!-\!\left( \mathbf{Y}^\star\!+\!{\mathbf{X}^{(2)}}^{-1} \right)^{\!\!\!-\!1}\right)\!\!{\mathbf{X}^{(2)}}^{-1},
\end{split} 
	\end{equation*}
and $\mathbf{Y}^\star  := \sum_{i=1}^N\!\boldsymbol{\omega}^\star_i\mathbf{W}_i^\top{\mathbf{X}^{(1)}_i}^{-1}\mathbf{W}_i.$
Furthermore, both the SCI problem and \eqref{eq:SCI_sdp} are feasible if and only if $\mathbf{H}$ is full column rank.
\end{theorem}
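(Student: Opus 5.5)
The plan is to obtain Theorem~\ref{th:SCI} as a direct specialization of Theorem~\ref{th:OCI_Kahan_optimal_Rpd}. As noted just before the statement, the generalized SCI problem is an OCI problem with $\mathbf{R}=\mathbf{X}^{(2)}\succ\zeros$, $\mathbf{C}=\mathbf{I}_o$ (so $m=o$), $M=N$, and selection matrices $\mathbf{W}_i$ with bounds $\mathbf{X}_i^{(1)}$. Then $\mathbf{Y}_i=\mathbf{W}_i^\top{\mathbf{X}_i^{(1)}}^{-1}\mathbf{W}_i$, and the Kahan-family-optimal pair $(\mathbf{K}^\star,\mathbf{B}^\star)$ is read off from \eqref{eq:OCI_Kahan_optimal_Rdp}.

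\textbf{Step 1: specialize the program and gain.} Substituting $\mathbf{R}=\mathbf{X}^{(2)}$ and $\mathbf{C}=\mathbf{I}$ into the two LMIs of \eqref{eq:OCI_Kahan_optimal_Rdp} gives exactly \eqref{eq:SCI_sdp}, since $\mathbf{H}^\top\mathbf{R}^{-1}\mathbf{C}=\mathbf{H}^\top{\mathbf{X}^{(2)}}^{-1}$ and $\mathbf{C}^\top\mathbf{R}^{-1}\mathbf{C}={\mathbf{X}^{(2)}}^{-1}$. In the gain formula, the matrix $\mathbf{Y}^\star+{\mathbf{X}^{(2)}}^{-1}$ is positive definite because $\mathbf{Y}^\star\succeq\zeros$ and ${\mathbf{X}^{(2)}}^{-1}\succ\zeros$. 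Hence the Moore--Penrose inverse is an ordinary inverse, which yields the stated $\mathbf{K}^\star$.

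\textbf{Step 2: recover the split bounds.} Define $\mathbf{B}^{\star(2)}:=\mathbf{K}^\star\mathbf{X}^{(2)}\mathbf{K}^{\star\top}$. Since $\EV[\mathbf{e}^{(2)}\mathbf{e}^{(2)\top}]=\mathbf{X}^{(2)}$ exactly, this equals $\EV[\tilde{\mathbf{e}}^{(2)}\tilde{\mathbf{e}}^{(2)\top}]$. Next set $\mathbf{B}^{\star(1)}:=\mathbf{B}^\star-\mathbf{B}^{\star(2)}$. The OCI constraint $\mathbf{B}^\star\succeq\mathbf{K}^\star(\mathbf{X}^{(2)}+\mathbf{P})\mathbf{K}^{\star\top}$ holds for all $\mathbf{P}\in\Pcal_{\mathrm{KF}}(\boldsymbol{\omega}^\star)\supseteq\Pcal$, using Lemma~\ref{lem:kahan_conservative}. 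Subtracting $\mathbf{B}^{\star(2)}$ then gives $\mathbf{B}^{\star(1)}\succeq\mathbf{K}^\star\mathbf{P}\mathbf{K}^{\star\top}=\EV[\tilde{\mathbf{e}}^{(1)}\tilde{\mathbf{e}}^{(1)\top}]$, which is the required bound on the first component. Optimality of $J(\mathbf{B})$ with $\mathbf{B}=\mathbf{B}^{(1)}+\mathbf{B}^{(2)}$ is inherited directly from Theorem~\ref{th:OCI_Kahan_optimal_Rpd}. The reason is that in the SCI problem $\mathbf{B}^{(2)}$ is fixed by $\mathbf{K}$, so the feasible $\mathbf{B}$ are exactly those of the OCI problem.

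\textbf{Step 3: feasibility.} This is the main, though mild, obstacle: simplifying the rank condition of Theorem~\ref{th:OCI_Kahan_optimal_Rpd}. Here $\mathbf{W}=\mathbf{I}_o$ (the stacked selection matrices form the identity), so the condition concerns
\begin{equation*}
\mathbf{H}^\top{\mathbf{X}^{(2)}}^{-1}\big(\mathbf{X}^{(2)}-(\mathbf{I}+{\mathbf{X}^{(2)}}^{-1})^{-1}\big){\mathbf{X}^{(2)}}^{-1}\mathbf{H}.
\end{equation*}
We have $\mathbf{X}^{(2)}-(\mathbf{I}+{\mathbf{X}^{(2)}}^{-1})^{-1}=\mathbf{X}^{(2)}-\mathbf{X}^{(2)}(\mathbf{X}^{(2)}+\mathbf{I})^{-1}\mathbf{X}^{(2)}$. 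By the Woodbury identity, this equals $(\,{\mathbf{X}^{(2)}}^{-1}+\mathbf{I})^{-1}\succ\zeros$, with the two matrices commuting as functions of $\mathbf{X}^{(2)}$. The middle factor is therefore positive definite, so the product is of the form $\mathbf{H}^\top\mathbf{Q}\mathbf{H}$ with $\mathbf{Q}\succ\zeros$. Such a matrix is full rank if and only if $\mathbf{H}$ has full column rank. The equivalence with feasibility of the SCI problem follows from the same equivalence in Theorem~\ref{th:OCI_Kahan_optimal_Rpd}, since the SCI problem and the OCI instance have identical feasible gains.
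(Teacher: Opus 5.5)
Your route is the paper's: it obtains Theorem~\ref{th:SCI} by instantiating Theorem~\ref{th:OCI_Kahan_optimal_Rpd} with $\mathbf{R}=\mathbf{X}^{(2)}$, $\mathbf{C}=\mathbf{I}$, $M=N$ (so $\mathbf{W}=\mathbf{I}_o$). Your Steps~1 and~2 correctly fill in details the paper leaves implicit: the Moore--Penrose inverse becomes an ordinary inverse, and $\mathbf{B}^\star$ splits into $\mathbf{B}^{\star(1)}+\mathbf{B}^{\star(2)}$ with $\mathbf{B}^{\star(1)}$ bounding $\mathbf{K}^\star\mathbf{P}\mathbf{K}^{\star\top}$ via Lemma~\ref{lem:kahan_conservative}.

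Step~3, however, contains a false identity. Abbreviate $\mathbf{X}:=\mathbf{X}^{(2)}$. Then $(\mathbf{I}+\mathbf{X}^{-1})^{-1}=\mathbf{X}(\mathbf{X}+\mathbf{I})^{-1}$, with one factor of $\mathbf{X}$, not two. The Woodbury identity gives $(\mathbf{I}+\mathbf{X}^{-1})^{-1}=\mathbf{X}-\mathbf{X}(\mathbf{X}+\mathbf{I})^{-1}\mathbf{X}$. You substituted this expansion for the whole difference instead of for the subtracted term. Your chain therefore asserts $\mathbf{X}-(\mathbf{I}+\mathbf{X}^{-1})^{-1}=(\mathbf{I}+\mathbf{X}^{-1})^{-1}$, which forces $\mathbf{X}=\mathbf{I}$; for the scalar $x=2$ the two sides are $4/3$ and $2/3$.

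The correct value is $\mathbf{X}-(\mathbf{I}+\mathbf{X}^{-1})^{-1}=\mathbf{X}(\mathbf{X}+\mathbf{I})^{-1}\mathbf{X}$. The middle factor is then exactly $\mathbf{X}^{-1}\mathbf{X}(\mathbf{X}+\mathbf{I})^{-1}\mathbf{X}\mathbf{X}^{-1}=(\mathbf{X}^{(2)}+\mathbf{I})^{-1}\succ\zeros$. The rank condition of Theorem~\ref{th:OCI_Kahan_optimal_Rpd} becomes ``$\mathbf{H}^\top(\mathbf{X}^{(2)}+\mathbf{I})^{-1}\mathbf{H}$ is full rank,'' i.e., $\mathbf{H}$ has full column rank. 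The property you actually needed, a positive definite middle factor, is true, so the argument goes through once this line is corrected.
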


Since the Kahan family of bounding ellipsoids described in Section~\ref{sec:oci_sol} is a generalization of the family commonly used in the SCI literature, if Theorem~\ref{th:SCI} is particularized to the standard SCI problem with two full states estimates, the exact same filters of \cite{LiNashashibiEtAl2013} will be obtained. 

\begin{mdframed}[style=callout]
	If common criteria for $J$ such as the trace or determinant are used, Theorem~\ref{th:SCI} characterizes the Kahan-family-optimal solution to the generalized and standard SCI problems as a SDP, which can be efficiently solved with with polynomial worst-case complexity with off-the-shelf solvers.
\end{mdframed}

An implementation of the solution to the SCI problem using a SDP and its comparison with the state-of-the-art method is available in an open-source repository at \href{https://github.com/decenter2021/unification-CI}{github.com/decenter2021/unification-CI}.

\section{Conclusion}

In this paper, we showed that the overlapping covariance intersection (OCI) framework provides a unifying formulation for several covariance intersection–based fusion problems. The proposed OCI framework admits a family-optimal solution that can be computed as the solution of a semidefinite program (SDP), enabling efficient numerical implementation. This unifying approach allows family-optimal solutions for multiple CI variants to be obtained within a single optimization framework. In particular, we demonstrated that the family-optimal solutions to both the standard covariance intersection (CI) problem and the split covariance intersection (SCI) problem, for a general number of partial estimates, can be characterized as solutions to SDPs. When specialized to their respective settings, the resulting solutions recover the previously reported state-of-the-art family-optimal solutions for CI and SCI. Overall, the results presented in this paper enable the real-time deployment of CI-based fusion methods in ultra large-scale estimation applications, such as cooperative localization.


\section*{Acknowledgment}
This work was supported in part by LARSyS FCT funding (DOI: {\small \href{https://doi.org/10.54499/LA/P/0083/2020}{\texttt{10.54499/LA/P/0083/2020}}, \href{https://doi.org/10.54499/UIDP/50009/2020}{\texttt{10.54499/UIDP/ 50009/2020}}}, and {\small \href{https://doi.org/10.54499/UIDB/50009/2020}{\texttt{10.54499/UIDB/50009/2020}}}).

\vspace{-0.1cm}
\appendix
\vspace{-0.1cm}
\subsection{Proof of Lemma~\ref{lem:kahan_conservative}}\label{sec:proof_lem_kahan_conservative}

The proof consists in showing that $\mathbf{P} \in \Pcal$ implies $\mathbf{P}\in  \Pcal_{\mathrm{KF}}(\boldsymbol{\omega})$ for all $\boldsymbol{\omega} \in  \Delta^M$. Consider any  $\mathbf{P} \in \Pcal$. From the definition of $\Pcal$ in \eqref{eq:Pcal_def} and \cite[Lemma~1]{PedrosoBatistaEtAl2025OCI} it follows that $\mathbf{P}^{-1} \succeq \mathbf{Y}_b$ for all $b = 1,2,\ldots, M$. Thus, for any $\boldsymbol{\omega} \in  \Delta^M$, one can write $\mathbf{P}^{-1} = \sum_{b = 1}^M \boldsymbol{\omega}_b \mathbf{P}^{-1} \succeq \sum_{b = 1}^M \boldsymbol{\omega}_b \mathbf{Y}_b$. Therefore, $\mathbf{P} \in  	\textstyle\Pcal_{\mathrm{KF}}(\boldsymbol{\omega})$ according to the definition of $\Pcal_{\mathrm{KF}}(\boldsymbol{\omega})$ in \eqref{eq:kahan_family}, which concludes the proof.
\subsection{Proof of Theorem~\ref{th:OCI_Kahan_optimal_Rz}}\label{sec:proof_th_OCI_Kahan_optimal_Rz}

The following results on the positive (semi)definiteness of block matrices will be instrumental in establishing the results in this paper.

\begin{proposition}[$\!\!${\cite[Prop.~I.1]{PedrosoBatistaEtAl2025OCI}}]\label{prop:schur}
	$\!$Let {\small $\mathbf{X} \!\!=\!\! \begin{bmatrix} \!\mathbf{A} &\! \!\!\!\mathbf{B} \\  \!\mathbf{B}^\top & \!\!\!\!\mathbf{C}
	\end{bmatrix}$} be a symmetric block matrix with $\mathbf{A}\in \R^{p\times p}$ and $\mathbf{C}\in \R^{q\times q}$. Then:
	\begin{enumerate}[(i)]
		\item If $\mathbf{B} \!=\! \mathbf{I}_p$, then $\mathbf{X}\!\succeq \!\zeros \! \implies\! {\mathbf{A} \!\succ\! \zeros},\; {\mathbf{C} \!\succ\! \zeros},\;\mathbf{A} \!\succeq \! \mathbf{C}^{-1},\; \mathbf{A}^{-1} \preceq \mathbf{C}$;\label{schur:BIr}
		\item  If $\mathbf{B} = \mathbf{I}_p$, then $\mathbf{A} \succ \zeros, \mathbf{C} \succ \zeros$ and either $ \mathbf{A} \succeq \mathbf{C}^{-1}$ or $\mathbf{A}^{-1} \preceq \mathbf{C}$ implies $\mathbf{X}\succeq \zeros$;\label{schur:BIl}
		\item If $\mathbf{X} \!\succ \!\zeros$ and $\mathbf{Q}\!\in\! S_{+}^q$, then {\small $\mathbf{X}^{-1} \!\succeq \!\begin{bmatrix}\! \mathbf{Q} & \!\!\!\zeros \\ \! \zeros &\!\!\!\zeros\end{bmatrix}$} $\!\!\implies \!\!\mathbf{A}^{-1} \!\succeq\! \mathbf{Q}$.\label{schur:ineq11}
	\end{enumerate}
\end{proposition}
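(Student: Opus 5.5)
The statement is a collection of standard facts about $2\times 2$ block matrices. I would prove each item directly, using the Schur complement together with the fact that matrix inversion reverses the Loewner order on positive definite matrices. Throughout (i)--(ii), $\mathbf{B}=\mathbf{I}_p$ forces $q=p$. In (iii), the block $\mathbf{Q}$ must occupy the upper-left $p\times p$ position, so I read it as $\mathbf{Q}\in S^p_+$.

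For (i), I first show $\mathbf{A}\succ\zeros$. Suppose $\mathbf{A}\mathbf{v}=\zeros$ for some $\mathbf{v}\neq\zeros$ and test $\mathbf{X}$ with $[\mathbf{v}^\top\; -t\mathbf{v}^\top]^\top$. The quadratic form becomes $-2t\|\mathbf{v}\|^2+t^2\mathbf{v}^\top\mathbf{C}\mathbf{v}$, which is negative for small $t>0$ and contradicts $\mathbf{X}\succeq\zeros$. Since $\mathbf{A}\succeq\zeros$ as a principal block, this gives $\mathbf{A}\succ\zeros$, and the symmetric argument gives $\mathbf{C}\succ\zeros$. With $\mathbf{A}\succ\zeros$, the Schur complement criterion turns $\mathbf{X}\succeq\zeros$ into $\mathbf{C}-\mathbf{A}^{-1}\succeq\zeros$, i.e. $\mathbf{A}^{-1}\preceq\mathbf{C}$. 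With $\mathbf{C}\succ\zeros$, the other Schur complement gives $\mathbf{A}\succeq\mathbf{C}^{-1}$. Alternatively, this last inequality follows from the order-reversing property of inversion on $S^p_{++}$.

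For (ii), I would first note that for $\mathbf{A},\mathbf{C}\succ\zeros$ the two conditions $\mathbf{A}\succeq\mathbf{C}^{-1}$ and $\mathbf{A}^{-1}\preceq\mathbf{C}$ are equivalent, again because inversion reverses the order. Assuming $\mathbf{A}\succeq\mathbf{C}^{-1}$, the Schur complement of the positive definite block $\mathbf{C}$ is $\mathbf{A}-\mathbf{C}^{-1}\succeq\zeros$, which yields $\mathbf{X}\succeq\zeros$.

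Item (iii) is the only delicate step. The naive route would compare $\mathbf{Q}$ with the upper-left block of $\mathbf{X}^{-1}$, namely $(\mathbf{A}-\mathbf{B}\mathbf{C}^{-1}\mathbf{B}^\top)^{-1}\succeq\mathbf{A}^{-1}$. That route fails because it points in the wrong direction, so the full matrix inequality must be used. Let $\mathbf{E}:=[\mathbf{I}_p\;\zeros]^\top$, so the hypothesis reads $\mathbf{X}^{-1}\succeq\mathbf{E}\mathbf{Q}\mathbf{E}^\top$. I apply the congruence by $\mathbf{X}^{1/2}$ to obtain $\mathbf{I}\succeq\mathbf{M}^\top\mathbf{M}$ with $\mathbf{M}:=\mathbf{Q}^{1/2}\mathbf{E}^\top\mathbf{X}^{1/2}$. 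This says $\|\mathbf{M}\|_2\le 1$, hence also
\begin{equation*}
\mathbf{I}\succeq\mathbf{M}\mathbf{M}^\top=\mathbf{Q}^{1/2}\mathbf{A}\mathbf{Q}^{1/2},
\end{equation*}
because $\mathbf{E}^\top\mathbf{X}\mathbf{E}=\mathbf{A}$. Thus $\|\mathbf{A}^{1/2}\mathbf{Q}^{1/2}\|_2\le 1$, which is equivalent to $\mathbf{A}^{1/2}\mathbf{Q}\mathbf{A}^{1/2}\preceq\mathbf{I}$. Here $\mathbf{A}\succ\zeros$ because it is a principal block of $\mathbf{X}\succ\zeros$. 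A final congruence by $\mathbf{A}^{-1/2}$ gives $\mathbf{Q}\preceq\mathbf{A}^{-1}$. This argument handles a singular $\mathbf{Q}$ without any invertibility assumption on it.
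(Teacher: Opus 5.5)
Your proof is correct. In (i), the kernel test with $[\mathbf{v}^\top\;-t\mathbf{v}^\top]^\top$ properly upgrades the principal blocks to positive definite before you invoke the Schur complement. (ii) follows from the Schur complement of either block together with the order reversal of inversion on positive definite matrices. In (iii), you apply the identity $\|\mathbf{M}^\top\mathbf{M}\|_2=\|\mathbf{M}\mathbf{M}^\top\|_2$ twice, together with $\mathbf{E}^\top\mathbf{X}\mathbf{E}=\mathbf{A}$, and obtain $\mathbf{Q}\preceq\mathbf{A}^{-1}$ without needing $\mathbf{Q}$ to be invertible. You are also right that the blockwise comparison through $(\mathbf{X}^{-1})_{11}\succeq\mathbf{A}^{-1}$ goes the wrong way.

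There is no in-paper argument to compare against, because the proposition is imported without proof from \cite[Prop.~I.1]{PedrosoBatistaEtAl2025OCI}. Your reading $\mathbf{Q}\in S^p_+$, rather than $S^q_+$ as printed, is the intended one. Its only use, in the proof of Lemma~\ref{lem:equiv_OCI_prob_1}, takes $\mathbf{Q}=\mathbf{D}$ as the $r\times r$ upper-left block. Finally, every step of your argument for (iii) is reversible, so you actually prove the equivalence $\mathbf{X}^{-1}\succeq\diag(\mathbf{Q},\zeros)\iff\mathbf{A}^{-1}\succeq\mathbf{Q}$.
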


The first step of the proof is decoupling the optimization of the gain and covariance bounds in problem~\eqref{eq:OCI_orig_prob}. Indeed, the following result establishes an equivalence between the solutions of \eqref{eq:OCI_orig_prob} with $\mathbf{R} = \zeros$ and of 
\begin{subequations}\label{eq:OCI_Y}
	\begin{alignat}{2}
		& \!\!\! \underset{\substack{\mathbf{B}\in S_+^n ,\mathbf{Y}\in S_+^o}}{\argmin} &\quad \!\!\!\!\!&J(\mathbf{B})\\
		\!\!\!& \!\!\!\quad\mathrm{s.t.} && \!\!\begin{bmatrix} \mathbf{B}&  \mathbf{I} \\ \mathbf{I} & \mathbf{H}^\top \mathbf{C}^{-\!\top}\mathbf{Y}\mathbf{C}^{-1}\mathbf{H}\end{bmatrix}  \succeq \zeros \label{eq:OCI_Y_B} \\
		&&&	\mathbf{Y} \preceq \mathbf{P}^{-1},\; \forall \mathbf{P} \in \Pcal. \label{eq:OCI_Y_Y} 
	\end{alignat}
\end{subequations}

\begin{lemma}\label{lem:equiv_OCI_prob_1}
	Assume that Assumption~\ref{ass:J} holds. If $(\mathbf{Y}^\star,\mathbf{B}^\star)$ is a solution to \eqref{eq:OCI_Y}, then the pair $(\mathbf{K}^\star,\mathbf{B}^\star)$ is a solution to \eqref{eq:OCI_orig_prob}, with $\mathbf{K}^\star :=  (\mathbf{H}^\top \mathbf{C}^{-\!\top} \mathbf{Y}^\star \mathbf{C}^{-1}\mathbf{H})^{-1}  \mathbf{H}^\top \mathbf{C}^{-\!\top} \mathbf{Y}^\star \mathbf{C}^{-1}$.  If $(\mathbf{K}^{\circ},\mathbf{B}^\circ)$ is a solution to \eqref{eq:OCI_orig_prob}, then $(\mathbf{Y}^\bullet,\mathbf{B}^\bullet)$ is a solution to \eqref{eq:OCI_Y}, with $\mathbf{Y}^\bullet := (\mathbf{K}^\circ\mathbf{C})^\top\mathbf{B}^{\circ+}\mathbf{K}^\circ\mathbf{C}$ and $\mathbf{B}^\bullet  :=	 (\mathbf{H}^\top \mathbf{C}^{-\!\top} \mathbf{Y}^\bullet \mathbf{C}^{-1}\mathbf{H})^{-1}$.
\end{lemma}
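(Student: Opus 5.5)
The plan is to change variables so that $\mathbf{C}$ disappears, and then show that each direction of the correspondence maps feasible points to feasible points with the same value of $\mathbf{B}$. Since $\mathbf{R}=\zeros$, we may take $o=m$ and $\mathbf{C}$ invertible. Set $\mathbf{G}:=\mathbf{K}\mathbf{C}$ and $\tilde{\mathbf{H}}:=\mathbf{C}^{-1}\mathbf{H}$. Problem \eqref{eq:OCI_orig_prob} then reads: minimize $J(\mathbf{B})$ subject to $\mathbf{G}\tilde{\mathbf{H}}=\mathbf{I}$ and $\mathbf{B}\succeq\mathbf{G}\mathbf{P}\mathbf{G}^\top$ for all $\mathbf{P}\in\Pcal$. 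Constraint \eqref{eq:OCI_Y_B} reads $\left[\begin{smallmatrix}\mathbf{B}&\mathbf{I}\\ \mathbf{I}&\tilde{\mathbf{H}}^\top\mathbf{Y}\tilde{\mathbf{H}}\end{smallmatrix}\right]\succeq\zeros$. The goal is to prove two feasibility maps, each preserving $\mathbf{B}$ (or producing an equal $\mathbf{B}$). These give equal optimal values, and then both claims follow. This uses only the fact that $J$ is a function of $\mathbf{B}$, so the weak monotonicity issues of Assumption~\ref{ass:J} never arise.

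\emph{From \eqref{eq:OCI_Y} to \eqref{eq:OCI_orig_prob}.} Let $(\mathbf{Y},\mathbf{B})$ be feasible for \eqref{eq:OCI_Y} and set $\mathbf{M}:=\tilde{\mathbf{H}}^\top\mathbf{Y}\tilde{\mathbf{H}}$. By Proposition~\ref{prop:schur}\eqref{schur:BIr}, $\mathbf{M}\succ\zeros$ and $\mathbf{B}\succeq\mathbf{M}^{-1}$. Hence $\mathbf{K}:=\mathbf{M}^{-1}\tilde{\mathbf{H}}^\top\mathbf{Y}\mathbf{C}^{-1}$ is well defined and satisfies $\mathbf{K}\mathbf{H}=\mathbf{I}$. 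For any $\mathbf{P}\in\Pcal$ we have
\begin{equation*}
\mathbf{K}\mathbf{C}\mathbf{P}\mathbf{C}^\top\mathbf{K}^\top=\mathbf{M}^{-1}\tilde{\mathbf{H}}^\top\mathbf{Y}\mathbf{P}\mathbf{Y}\tilde{\mathbf{H}}\mathbf{M}^{-1}.
\end{equation*}
The key inequality is $\mathbf{Y}\mathbf{P}\mathbf{Y}\preceq\mathbf{Y}$. It follows from \eqref{eq:OCI_Y_Y}: that constraint gives $\mathbf{S}:=\mathbf{P}^{1/2}\mathbf{Y}\mathbf{P}^{1/2}\preceq\mathbf{I}$, so $\mathbf{S}^2\preceq\mathbf{S}$, and we conjugate by $\mathbf{P}^{-1/2}$. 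This inequality holds even if $\mathbf{Y}$ is singular. It yields $\mathbf{K}\mathbf{C}\mathbf{P}\mathbf{C}^\top\mathbf{K}^\top\preceq\mathbf{M}^{-1}\preceq\mathbf{B}$, so $(\mathbf{K},\mathbf{B})$ is feasible for \eqref{eq:OCI_orig_prob} with the same cost.

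\emph{From \eqref{eq:OCI_orig_prob} to \eqref{eq:OCI_Y}.} Let $(\mathbf{K},\mathbf{B})$ be feasible for \eqref{eq:OCI_orig_prob}. Since $\mathbf{G}\tilde{\mathbf{H}}=\mathbf{I}$, the matrix $\mathbf{G}$ has full row rank. As $\mathbf{P}\succ\zeros$ and $\Pcal\neq\emptyset$, it follows that $\mathbf{B}\succeq\mathbf{G}\mathbf{P}\mathbf{G}^\top\succ\zeros$, so $\mathbf{B}^+=\mathbf{B}^{-1}$. Set $\mathbf{Y}:=\mathbf{G}^\top\mathbf{B}^{-1}\mathbf{G}$. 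To check \eqref{eq:OCI_Y_Y}, note that $\mathbf{G}\mathbf{P}\mathbf{G}^\top\preceq\mathbf{B}$ means $\|\mathbf{B}^{-1/2}\mathbf{G}\mathbf{P}^{1/2}\|\le1$. Hence $\mathbf{P}^{1/2}\mathbf{Y}\mathbf{P}^{1/2}\preceq\mathbf{I}$, that is, $\mathbf{Y}\preceq\mathbf{P}^{-1}$. Moreover $\tilde{\mathbf{H}}^\top\mathbf{Y}\tilde{\mathbf{H}}=(\mathbf{G}\tilde{\mathbf{H}})^\top\mathbf{B}^{-1}\mathbf{G}\tilde{\mathbf{H}}=\mathbf{B}^{-1}$. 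Proposition~\ref{prop:schur}\eqref{schur:BIl} then gives \eqref{eq:OCI_Y_B}, and also $\mathbf{B}^\bullet=(\tilde{\mathbf{H}}^\top\mathbf{Y}^\bullet\tilde{\mathbf{H}})^{-1}=\mathbf{B}^\circ$ exactly.

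\emph{Conclusion.} The two maps show that the optimal values of \eqref{eq:OCI_orig_prob} and \eqref{eq:OCI_Y} coincide. The first map sends a minimizer $(\mathbf{Y}^\star,\mathbf{B}^\star)$ to the feasible pair $(\mathbf{K}^\star,\mathbf{B}^\star)$, with $\mathbf{K}^\star$ exactly as in the statement, attaining the optimal value; so it is a solution. The second map sends a minimizer $(\mathbf{K}^\circ,\mathbf{B}^\circ)$ to $(\mathbf{Y}^\bullet,\mathbf{B}^\bullet)$ with $\mathbf{B}^\bullet=\mathbf{B}^\circ$, which is therefore optimal. The step I expect to be most delicate is the second direction: verifying $\mathbf{Y}^\bullet\preceq\mathbf{P}^{-1}$, and seeing that $\mathbf{B}^\bullet=\mathbf{B}^\circ$, which avoids any continuity or weak-monotonicity argument on $J$. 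The identity $\mathbf{G}\tilde{\mathbf{H}}=\mathbf{I}$ is what resolves that step.
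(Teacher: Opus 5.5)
Your proof is correct, and it takes a more direct route than the paper's.

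In the first direction, you and the paper establish the same key inequality $\mathbf{Y}^\star\mathbf{P}\mathbf{Y}^\star\preceq\mathbf{Y}^\star$, but by different means.
\begin{itemize}
\item The paper uses a spectral factorization $\mathbf{Y}^\star=\mathbf{V}\mathbf{D}\mathbf{V}^\top$ and Proposition~\ref{prop:schur}\eqref{schur:ineq11} to obtain $\mathbf{V}^\top\mathbf{P}\mathbf{V}\preceq\mathbf{D}^{-1}$.
\item Your argument, $\zeros\preceq\mathbf{P}^{1/2}\mathbf{Y}\mathbf{P}^{1/2}\preceq\mathbf{I}\Rightarrow\mathbf{S}^2\preceq\mathbf{S}$, is shorter. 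It also makes explicit the ``algebraic manipulation'' the paper leaves implicit.
\end{itemize}

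The real difference is in the second direction.
\begin{itemize}
\item The paper imports the second part of the proof of Lemma~3 of its earlier OCI work, specialized to $\mathbf{R}=\zeros$. This yields $\mathbf{Y}^\bullet\preceq\mathbf{P}^{-1}$ and an auxiliary $\mathbf{B}$ with $J(\mathbf{B})\le J(\mathbf{B}^\circ)$. It then needs the $\mathbf{S},\mathbf{S}_\perp$ construction to show that this auxiliary $\mathbf{B}$ equals $\mathbf{B}^\bullet$.
\item You observe that $\mathbf{K}^\circ\mathbf{H}=\mathbf{I}$ forces $\mathbf{H}^\top\mathbf{C}^{-\top}\mathbf{Y}^\bullet\mathbf{C}^{-1}\mathbf{H}=(\mathbf{B}^{\circ})^{-1}$, so $\mathbf{B}^\bullet=\mathbf{B}^\circ$ exactly. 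You verify $\mathbf{Y}^\bullet\preceq\mathbf{P}^{-1}$ with a one-line operator-norm argument.
\end{itemize}

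What your route buys:
\begin{itemize}
\item The proof is self-contained.
\item It shows that \eqref{eq:OCI_orig_prob} (with $\mathbf{R}=\zeros$) and \eqref{eq:OCI_Y} have exactly the same set of achievable $\mathbf{B}$.
\item Consequently the lemma holds for any $J:S^n_+\to\R$. Assumption~\ref{ass:J} is never used. In the paper it enters only through the imported inequality $J(\mathbf{B})\le J(\mathbf{B}^\circ)$, which here is an equality of arguments.
\end{itemize}
The paper's route mainly buys structural parallelism with the $\mathbf{R}\succ\zeros$ analysis, where this collapse does not occur.

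One small step needs filling in. You use $\Pcal\neq\emptyset$ without justification, and it is genuinely needed: otherwise $\mathbf{B}^\circ$ could be singular and $\mathbf{B}^{\circ+}\neq(\mathbf{B}^{\circ})^{-1}$. It is immediate, though: $\varepsilon\mathbf{I}\in\Pcal$ for small $\varepsilon>0$, since every $\mathbf{X}_b\succ\zeros$.
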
 \vspace{-0.3cm}%
\begin{proof}
	We start by showing that if $(\mathbf{Y}^\star,\mathbf{B}^\star)$ is in the feasible domain of \eqref{eq:OCI_Y}, then  $(\mathbf{K}^\star,\mathbf{B}^\star)$ is in the feasible domain of \eqref{eq:OCI_orig_prob}. First, notice that $\mathbf{K}^\star$ is well-defined because applying Proposition~\ref{prop:schur}\eqref{schur:BIr} to \eqref{eq:OCI_Y_B} yields $\mathbf{H}^\top \mathbf{C}^{-\!\top}\mathbf{Y}^\star\mathbf{C}^{-1}\mathbf{H} \succ \zeros$. Then, it follows that $\mathbf{K}^\star\mathbf{H} = \mathbf{I}$. Second, by hypothesis, $\mathbf{Y}^\star \preceq \mathbf{P}^{-1}$ for all $\mathbf{P}\in \Pcal$, so $\mathbf{H}^\top \mathbf{C}^{-\!\top} \mathbf{Y}^\star \mathbf{C}^{-1}\mathbf{H} \preceq  \mathbf{H}^\top \mathbf{C}^{-\!\top} \mathbf{P}^{-1} \mathbf{C}^{-1}\mathbf{H}$. Therefore, by Proposition~\ref{prop:schur}\eqref{schur:BIr} and \eqref{eq:OCI_Y} it follows that 
	\begin{equation}\label{eq:star_ineq1}
		\!\!\!\mathbf{B}^\star \! \succeq\! (\mathbf{H}^\top\! \mathbf{C}^{-\!\top}\! \mathbf{Y}^\star \mathbf{C}^{-1}\mathbf{H})^{-1} \!\succeq\! (\mathbf{H}^\top \!\mathbf{C}^{-\!\top}\!\mathbf{P}^{-1} \mathbf{C}^{-1}\mathbf{H})^{-1}\vspace{-0.1cm}
	\end{equation}
	for all $\mathbf{P}\in \Pcal$. Matrix $\mathbf{Y}^\star$ is real, symmetric, and positive semidefinite, so by the spectral theorem for real symmetric matrices it admits a factorization $\mathbf{Y}^\star = [\mathbf{V}\; \mathbf{V}_\perp]\diag(\mathbf{D},\zeros)[\mathbf{V}\; \mathbf{V}_\perp]^\top,$
	where $\mathbf{D}\in S_{++}^r$ is a diagonal matrix with $r= \rank(\mathbf{Y}^\star)$ and the columns of $[\mathbf{V}\; \mathbf{V}_\perp]$ form an orthonormal basis for $\R^o$. Since, by hypothesis, $\mathbf{Y}^\star \preceq \mathbf{P}^{-1}$ for all $\mathbf{P}\in \Pcal$, using the factorization for $\mathbf{Y}^\star$, one can write $	\diag(\mathbf{D},\zeros) \preceq	[\mathbf{V} \; \mathbf{V}_\perp]^\top\!\mathbf{P}^{-1}[\mathbf{V} \; \mathbf{V}_\perp] = 	([\mathbf{V} \; \mathbf{V}_\perp]^\top\!\mathbf{P}[\mathbf{V} \; \mathbf{V}_\perp])^{-1}\!\!$.
	Thus, by Proposition~\ref{prop:schur}\eqref{schur:ineq11}, $\mathbf{V}^\top\mathbf{P}\mathbf{V} \preceq \mathbf{D}^{-1}$ for all $\mathbf{P}\in \Pcal$. As a result, one can write
	\begin{equation}\label{eq:star_ineq2}
			\mathbf{Y}^\star\mathbf{P}\mathbf{Y}^\star = \mathbf{V}\mathbf{D}\mathbf{V}^\top \!\mathbf{P} \mathbf{V} \mathbf{D}\mathbf{V}^\top \!\!\preceq \mathbf{V} \mathbf{D}\mathbf{D}^{-1}\mathbf{D} \mathbf{V}^\top  \!\!=\! \mathbf{Y}^\star\!. \vspace{-0.1cm}
	\end{equation}
	Finally, after algebraic manipulation resorting to \eqref{eq:star_ineq1} and  \eqref{eq:star_ineq2}, it follows that $\mathbf{K}^\star \mathbf{C} \mathbf{P} \mathbf{C}^\top \mathbf{K}^{\star\top} \preceq \mathbf{B}^\star$ for all $\mathbf{P}\in \Pcal$. Therefore, $(\mathbf{K}^\star,\mathbf{B}^\star)$ is in the feasible domain of \eqref{eq:OCI_orig_prob}.
	
	Second, we show that if $(\mathbf{K}^{\circ},\mathbf{B}^\circ)$ is in the feasible domain of \eqref{eq:OCI_orig_prob}, then $(\mathbf{Y}^\bullet,\mathbf{B}^\bullet)$ is in the feasible domain of \eqref{eq:OCI_Y} and  $J(\mathbf{B}^\bullet) \leq J(\mathbf{B}^\circ)$. Consider a factorization $\mathbf{Y}^\bullet = [\mathbf{V}\; \mathbf{V}_\perp]\diag(\mathbf{D},\zeros)[\mathbf{V}\; \mathbf{V}_\perp]^\top$. Define $\mathbf{S}_\perp$ as a matrix whose columns form an orthonormal basis for $\mathbf{C}\mathbf{V}_\perp$ and $\mathbf{S}$ such that the columns of $[\mathbf{S}\; \mathbf{S}_\perp]$ form an orthonormal basis for $\R^o$. First, following exactly the same steps as of the proof of the second statement of \cite[Lemma~3]{PedrosoBatistaEtAl2025OCI} with $\mathbf{R} = \zeros$ we conclude that $\mathbf{P}^{-1}\succeq (\mathbf{K}^\circ\mathbf{C})^\top{\mathbf{B}}^{\circ+} \mathbf{K}^\circ\mathbf{C} = \mathbf{Y}^\bullet$ for all $\mathbf{P}\in \Pcal$, so constraint \eqref{eq:OCI_Y_Y} is satisfied. Moreover, we also conclude that $\mathbf{B}
	= (\mathbf{H}^\top\mathbf{S}\left(\mathbf{S}^\top \mathbf{C}\mathbf{Y}^{\bullet+}\mathbf{C}^\top\mathbf{S}\right)^{-1}\mathbf{S}^\top\mathbf{H})^{-1}$ is well-defined and satisfies $J(\mathbf{B})\leq J(\mathbf{B}^\circ)$.  Expanding $\mathbf{S}(\mathbf{S}^\top \mathbf{C}\mathbf{Y}^{\bullet+}\mathbf{C}^\top\mathbf{S})^{-1}\mathbf{S}^\top$ yields
	\begin{equation*}
		\begin{split}
				&	\mathbf{S}(\mathbf{S}^\top \mathbf{C}[\mathbf{V}\; \mathbf{V}_\perp]\diag(\mathbf{D}^{-1},\zeros)[\mathbf{V}\; \mathbf{V}_\perp]^\top\mathbf{C}^\top\mathbf{S})^{-1}\mathbf{S}^\top \\
				 = \:& \mathbf{S}(\mathbf{V}^\top\mathbf{C}^\top\mathbf{S})^{-1} \mathbf{D}(\mathbf{V}^\top\mathbf{C}^\top\mathbf{S})^{-\top}\mathbf{S}^\top\\
				 = \:& \mathbf{C}^{-\!\top}[\mathbf{V}\;\mathbf{V}_\perp] [\mathbf{V}\;\mathbf{V}_\perp]^\top\! \mathbf{C}^\top \!\mathbf{S}(\mathbf{V}^{\!\top\!}\mathbf{C}^\top\!\mathbf{S})^{-1} \mathbf{D}(\mathbf{V}^{\!\top\!}\mathbf{C}^\top\mathbf{S})^{-\!\top\!}\mathbf{S}^\top\\
				 = \:& \mathbf{C}^{-\!\top}[\mathbf{V}\;\mathbf{V}_\perp] [\mathbf{I}\; \zeros]^\top \mathbf{D}  [\mathbf{I}\; \zeros] [\mathbf{V}\;\mathbf{V}_\perp]^\top \mathbf{C}^{-1}\\
				 = \: & \mathbf{C}^{-\!\top}[\mathbf{V}\;\mathbf{V}_\perp]\diag(\mathbf{D},\zeros)[\mathbf{V}\;\mathbf{V}_\perp]^\top \mathbf{C}^{-1}\\
				 = \: & \mathbf{C}^{-\!\top}\mathbf{Y}^\bullet \mathbf{C}^{-1},
		\end{split}
	\end{equation*}%
	where we used the fact that $\mathbf{V}^\top\mathbf{C}^\top\mathbf{S}$ is a $r\times r$ invertible matrix, since $\rank(\mathbf{S}^\top\mathbf{C}\mathbf{V}) = \rank([\mathbf{S}^\top\!\mathbf{C}\mathbf{V} \;\! \zeros]) = \rank(\mathbf{S}^\top\mathbf{C}[\mathbf{V} \;\! \mathbf{V}_\perp]) = \rank(\mathbf{S}^\top\mathbf{C}) = \rank(\mathbf{S}) = o\!-\!\rank(\mathbf{S}_\perp) \!=\! o\!-\!\rank(\mathbf{C}\mathbf{V}_\perp) \!=\!  o\!-\!\rank(\mathbf{V}_\perp) \!=\! o\!-\!(o\!-\!r) \!=\! r$.
	It follows that $(\mathbf{H}^\top\mathbf{S}\left(\mathbf{S}^\top \mathbf{C}\mathbf{Y}^{\bullet+}\mathbf{C}^\top\mathbf{S}\right)^{-1}\mathbf{S}^\top\mathbf{H})^{-1} = (\mathbf{H}^\top \mathbf{C}^{-\!\top} \mathbf{Y}^\bullet \mathbf{C}^{-1}\mathbf{H})^{-1} = \mathbf{B}^\bullet$ and $J(\mathbf{B}^\bullet)\leq J(\mathbf{B}^\circ)$. As a result, since $\mathbf{B}^\bullet  = (\mathbf{H}^\top \mathbf{C}^{-\!\top} \mathbf{Y}^\bullet \mathbf{C}^{-1}\mathbf{H})^{-1} \succ \zeros$, by Proposition~\ref{prop:schur}\eqref{schur:BIl}, constraint \eqref{eq:OCI_Y_B} is satisfied. Therefore, $(\mathbf{Y}^\bullet,\mathbf{B}^\bullet)$ is in the feasible domain of \eqref{eq:OCI_Y}.

	Third, let $(\mathbf{Y}^\star,\mathbf{B}^\star)$ be a solution to \eqref{eq:OCI_Y}. Therefore, $(\mathbf{Y}^\star,\mathbf{B}^\star)$ is in the feasible domain of \eqref{eq:OCI_Y} and, by the previous analysis $(\mathbf{K}^\star,\mathbf{B}^\star)$ is in the the feasible domain of \eqref{eq:OCI_orig_prob}. We show that $(\mathbf{K}^\star,\mathbf{B}^\star)$ is a solution to \eqref{eq:OCI_orig_prob} by contradiction. Assume, by contradiction, that there is $(\mathbf{K}^{\circ},\mathbf{B}^\circ)$ in the feasible domain of \eqref{eq:OCI_orig_prob} such that $J(\mathbf{B}^\circ) < J(\mathbf{B}^\star)$. By the previous analysis, it follows that the tuple $(\mathbf{Y}^\bullet, \mathbf{B}^\bullet)$ is in the feasible domain of \eqref{eq:OCI_Y} and $J(\mathbf{B}^\bullet) \leq J(\mathbf{B}^\circ)$. Since $(\mathbf{Y}^\star,\mathbf{B}^\star)$ is a solution to \eqref{eq:OCI_Y}, then $J(\mathbf{B}^\star) \leq J(\mathbf{B}^\bullet) \leq J(\mathbf{B}^\circ)$. Bringing everything together yields $J(\mathbf{B}^\circ) < J(\mathbf{B}^\star)  \leq J(\mathbf{B}^\bullet) \leq J(\mathbf{B}^\circ)$, which is a contradiction.
	
	Fourth, let $(\mathbf{K}^{\circ},\mathbf{B}^\circ)$ be a solution to \eqref{eq:OCI_orig_prob}. Therefore, $(\mathbf{K}^{\circ},\mathbf{B}^\circ)$ is in the feasible domain of  \eqref{eq:OCI_orig_prob} and, by the previous analysis, the tuple $(\mathbf{Y}^\bullet, \mathbf{B}^\bullet)$ is in the feasible domain of \eqref{eq:OCI_Y} and $J(\mathbf{B}^\bullet) \leq J(\mathbf{B}^\circ)$. We show that  $(\mathbf{Y}^\bullet, \mathbf{B}^\bullet)$ is a solution to \eqref{eq:OCI_Y} by contradiction. Assume, by contradiction, that there is  $(\mathbf{Y}^\star,\mathbf{B}^\star)$ in the feasible domain of \eqref{eq:OCI_Y} such that $J(\mathbf{B}^\star) < J(\mathbf{B}^\bullet)$. By the previous analysis, it follows that $(\mathbf{K}^\star,\mathbf{B}^\star)$ is in the feasible domain of \eqref{eq:OCI_orig_prob}. Since $(\mathbf{K}^{\circ},\mathbf{B}^\circ)$ is a solution to \eqref{eq:OCI_orig_prob}, then $J(\mathbf{B}^\circ)\leq J(\mathbf{B}^\star)$. Bringing everything together yields $J(\mathbf{B}^\star) < J(\mathbf{B}^\bullet) \leq J(\mathbf{B}^\circ) \leq J(\mathbf{B}^\star)$, which is a contradiction.
\end{proof}

In the following lemma, we establish a necessary and sufficient condition for the feasibility of \eqref{eq:OCI_orig_prob}.

\begin{lemma}\label{lem:feas_iff}
	Under Assumption~\ref{ass:J}, the OCI problem \eqref{eq:OCI_orig_prob} is feasible if and only if $\mathbf{H}^\top\mathbf{C}^{-\top}\mathbf{W}^\top\mathbf{W}\mathbf{C}^{-1}\mathbf{H}$ is full rank.
\end{lemma}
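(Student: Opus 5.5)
The plan is to prove sufficiency through the reformulation \eqref{eq:OCI_Y} and Lemma~\ref{lem:equiv_OCI_prob_1}, and necessity directly on \eqref{eq:OCI_orig_prob} by constructing an unbounded family of admissible matrices. Throughout, $\mathbf{R}=\zeros$ and $\mathbf{C}$ is square and invertible, as argued before Theorem~\ref{th:OCI_Kahan_optimal_Rz}. The only part of Lemma~\ref{lem:equiv_OCI_prob_1} I will use is the first step of its proof: any feasible $(\mathbf{Y},\mathbf{B})$ of \eqref{eq:OCI_Y} yields a feasible $(\mathbf{K},\mathbf{B})$ of \eqref{eq:OCI_orig_prob}. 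That step relies only on feasibility, not on optimality, so it applies here.

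\emph{Sufficiency.} Assume $\mathbf{H}^\top\mathbf{C}^{-\top}\mathbf{W}^\top\mathbf{W}\mathbf{C}^{-1}\mathbf{H}$ is full rank. I take $\mathbf{Y}:=\frac{1}{M}\sum_{b=1}^M\mathbf{Y}_b$, which is the Kahan bound with uniform weights.
\begin{enumerate}[(i)]
\item By Lemma~\ref{lem:kahan_conservative}, $\mathbf{Y}\preceq\mathbf{P}^{-1}$ for all $\mathbf{P}\in\Pcal$, so constraint \eqref{eq:OCI_Y_Y} holds.
\item The matrix $\mathbf{G}:=\mathbf{H}^\top\mathbf{C}^{-\top}\mathbf{Y}\mathbf{C}^{-1}\mathbf{H}=\frac{1}{M}\sum_b(\mathbf{W}_b\mathbf{C}^{-1}\mathbf{H})^\top\mathbf{X}_b^{-1}(\mathbf{W}_b\mathbf{C}^{-1}\mathbf{H})$ is a sum of positive semidefinite terms. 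Since each $\mathbf{X}_b^{-1}\succ\zeros$, we have $\mathbf{v}^\top\mathbf{G}\mathbf{v}=0$ if and only if $\mathbf{W}_b\mathbf{C}^{-1}\mathbf{H}\mathbf{v}=\zeros$ for every $b$. That is exactly $\mathbf{W}\mathbf{C}^{-1}\mathbf{H}\mathbf{v}=\zeros$, which is equivalent to $\mathbf{v}^\top\mathbf{H}^\top\mathbf{C}^{-\top}\mathbf{W}^\top\mathbf{W}\mathbf{C}^{-1}\mathbf{H}\mathbf{v}=0$.
\item Hence $\mathbf{G}$ and the matrix in the statement have the same kernel, so $\mathbf{G}\succ\zeros$.
\item Setting $\mathbf{B}:=\mathbf{G}^{-1}$, Proposition~\ref{prop:schur}\eqref{schur:BIl} gives \eqref{eq:OCI_Y_B}, so $(\mathbf{Y},\mathbf{B})$ is feasible for \eqref{eq:OCI_Y}.
\item The first step of the proof of Lemma~\ref{lem:equiv_OCI_prob_1} then yields a feasible pair for \eqref{eq:OCI_orig_prob}.
\end{enumerate}

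\emph{Necessity.} Suppose the matrix in the statement is singular, and pick $\mathbf{v}\neq\zeros$ with $\mathbf{W}\mathbf{C}^{-1}\mathbf{H}\mathbf{v}=\zeros$.
\begin{enumerate}[(i)]
\item If $\mathbf{H}\mathbf{v}=\zeros$, then $\mathbf{H}$ is column-rank deficient and no $\mathbf{K}$ satisfies $\mathbf{K}\mathbf{H}=\mathbf{I}$. The problem is then infeasible and we are done.
\item Otherwise $\mathbf{u}:=\mathbf{C}^{-1}\mathbf{H}\mathbf{v}\neq\zeros$ and $\mathbf{W}_b\mathbf{u}=\zeros$ for all $b$.
\item To get a point of $\Pcal$, take $\mathbf{P}_0:=\varepsilon\mathbf{I}$ with $\varepsilon>0$ small enough that $\varepsilon\mathbf{W}_b\mathbf{W}_b^\top\preceq\mathbf{X}_b$ for all $b$. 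This is possible because each $\mathbf{X}_b\succ\zeros$.
\item Define $\mathbf{P}_t:=\mathbf{P}_0+t\mathbf{u}\mathbf{u}^\top$. For all $t\geq0$ it is positive definite and satisfies $\mathbf{W}_b\mathbf{P}_t\mathbf{W}_b^\top=\mathbf{W}_b\mathbf{P}_0\mathbf{W}_b^\top\preceq\mathbf{X}_b$, so $\mathbf{P}_t\in\Pcal$.
\item For any $\mathbf{K}$ with $\mathbf{K}\mathbf{H}=\mathbf{I}$ we have $\mathbf{K}\mathbf{C}\mathbf{u}=\mathbf{K}\mathbf{H}\mathbf{v}=\mathbf{v}$. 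Therefore
\[
\mathbf{K}\mathbf{C}\mathbf{P}_t\mathbf{C}^\top\mathbf{K}^\top=\varepsilon\mathbf{K}\mathbf{C}\mathbf{C}^\top\mathbf{K}^\top+t\,\mathbf{v}\mathbf{v}^\top .
\]
\item This grows without bound in the direction $\mathbf{v}$ as $t\to\infty$. No finite $\mathbf{B}$ can satisfy the robust constraint in \eqref{eq:OCI_orig_prob}, so the problem is infeasible.
\end{enumerate}

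The main obstacle is the sufficiency direction. One must ensure that a feasible point of \eqref{eq:OCI_Y} really maps to a feasible point of \eqref{eq:OCI_orig_prob}, which requires the nontrivial bound $\mathbf{Y}\mathbf{P}\mathbf{Y}\preceq\mathbf{Y}$ established inside Lemma~\ref{lem:equiv_OCI_prob_1}. I will cite that first part of its proof rather than redo it. The kernel identification in step (ii) of sufficiency and the construction of $\mathbf{P}_t$ in step (iv) of necessity are routine.
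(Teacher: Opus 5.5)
Your proof is correct, and the two halves relate differently to the paper's. The sufficiency half is essentially the paper's argument:
\begin{itemize}
\item You use the same $\mathbf{Y}=\frac{1}{M}\sum_b\mathbf{Y}_b$, which equals the paper's $\mathbf{W}^\top\diag(\mathbf{X}_1,\ldots,\mathbf{X}_M)^{-1}\mathbf{W}/M$.
\item You use the same $\mathbf{B}=\mathbf{G}^{-1}$ and the same feasibility-transfer step from the proof of Lemma~\ref{lem:equiv_OCI_prob_1}.
\item You justify $\mathbf{Y}\preceq\mathbf{P}^{-1}$ via Lemma~\ref{lem:kahan_conservative} rather than the externally cited proposition.
\item Your kernel computation makes the paper's ``same column space'' remark explicit.
\end{itemize}

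The necessity half takes a genuinely different route. The paper stays inside the reformulation \eqref{eq:OCI_Y}. It invokes the full feasibility equivalence of Lemma~\ref{lem:equiv_OCI_prob_1}, whose reverse step rests on an argument imported from the earlier OCI work, using $\mathbf{Y}^\bullet=(\mathbf{K}^\circ\mathbf{C})^\top\mathbf{B}^{\circ+}\mathbf{K}^\circ\mathbf{C}$. It then asserts, fairly informally, that any $\mathbf{Y}$ with $\mathbf{Y}\preceq\mathbf{P}^{-1}$ for all $\mathbf{P}\in\Pcal$ has column space inside that of $\mathbf{W}^\top\mathbf{W}$.

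You instead work on \eqref{eq:OCI_orig_prob} directly. You exhibit the admissible ray $\mathbf{P}_t=\varepsilon\mathbf{I}+t\mathbf{u}\mathbf{u}^\top$ with $\mathbf{u}\in\ker\mathbf{W}$. Along it, the worst-case error second moment of every unbiased gain grows without bound in the direction $\mathbf{v}$.

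Your route is more elementary and self-contained. It needs neither the reverse direction of Lemma~\ref{lem:equiv_OCI_prob_1} nor the imported lemma. It also supplies the rigorous content behind the paper's ``bounded components'' remark: along the same ray, $\mathbf{u}^\top\mathbf{P}_t^{-1}\mathbf{u}\to 0$ by Sherman--Morrison, which together with $\mathbf{Y}\succeq\zeros$ forces $\mathbf{Y}\mathbf{u}=\zeros$.

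What the paper's route buys is a direct characterization of feasibility of \eqref{eq:OCI_Y} itself, which it reuses in the proof of Theorem~\ref{th:OCI_Kahan_optimal_Rz}. Your version recovers this as well. A feasible point of \eqref{eq:OCI_Y} yields a feasible point of \eqref{eq:OCI_orig_prob} by the step you already cite, and your necessity argument then gives the rank condition.
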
%
\begin{proof}
	Lemma~\ref{lem:equiv_OCI_prob_1} establishes that the OCI problem \eqref{eq:OCI_orig_prob} is feasible if and only if \eqref{eq:OCI_Y} is feasible. Thus, we only need to show the equivalence between the rank condition and  the feasibility of \eqref{eq:OCI_Y}. 
	In one direction of the equivalence, let $\mathbf{H}^\top\!\mathbf{C}^{-\!\top}\mathbf{W}^\top\!\mathbf{W}\mathbf{C}^{-1}\mathbf{H}$ be full rank. By  \cite[Prop.~I.2]{PedrosoBatistaEtAl2025OCI}, it follows that $\mathbf{P}^{-1} \succeq \mathbf{Y} := \mathbf{W}^\top\!\diag(\mathbf{X}_1, \ldots,\mathbf{X}_M)^{-1}\mathbf{W}/M$ for all $\mathbf{P}\!\in \!\Pcal$. Thus, $\mathbf{Y}$ satisfies \eqref{eq:OCI_Y_Y}. Notice that $\mathbf{Y}$ has the same column space as $\mathbf{W}^\top\mathbf{W}$. Therefore, by hypothesis, $\mathbf{H}^\top\mathbf{C}^{-\top}\mathbf{Y}\mathbf{C}^{-1}\mathbf{H}$ is also full rank. As a result, by Proposition~\ref{prop:schur}\eqref{schur:BIl} the pair $(\mathbf{B},\mathbf{Y})$ with $\mathbf{B} = (\mathbf{H}^\top\mathbf{C}^{-\top}\mathbf{Y}\mathbf{C}^{-1}\mathbf{H})^{-1}$ satisfies \eqref{eq:OCI_Y_B}. Thus, \eqref{eq:OCI_Y} is feasible.
	In the other direction, assume that \eqref{eq:OCI_Y} is feasible. Then, there exists $\mathbf{Y}$ such that $\mathbf{Y} \preceq \mathbf{P}^{-1}$ for all $\mathbf{P}\in \Pcal$ and, by Proposition~\ref{prop:schur}\eqref{schur:BIr}, $\mathbf{H}^\top\mathbf{C}^{-\top}\mathbf{Y}\mathbf{C}^{-1}\mathbf{H}$ is full rank. Since the bounded components of $\mathbf{P}$ are characterized by the row space of $\mathbf{W}$, it follows that $\col{\mathbf{Y}}\subseteq \col{\mathbf{W}^\top\mathbf{W}}$, therefore $\mathbf{H}^\top\mathbf{C}^{-\top}\mathbf{W}^\top\mathbf{W}\mathbf{C}^{-1}\mathbf{H}$ is also full rank, which concludes the proof.
\end{proof}

Finally, we prove the theorem resorting to Lemmas~\ref{lem:equiv_OCI_prob_1} and~\ref{lem:feas_iff}. First, one can decouple the optimization over $\boldsymbol{\omega}$ from the remaining decision variables in \eqref{eq:OCI_kahan}. For a fixed $\boldsymbol{\omega}$, \eqref{eq:OCI_kahan} is in the form of  \eqref{eq:OCI_orig_prob} and one can use Lemma~\ref{lem:equiv_OCI_prob_1} to characterize the solutions to \eqref{eq:OCI_kahan} for the fixed $\boldsymbol{\omega}$ using a problem of the form of \eqref{eq:OCI_Y}. Then, one can reintroduce the optimization over $\boldsymbol{\omega}$ to show the first statement of the theorem. Second, we turn to the second statement of the theorem. In one direction, if $(\mathbf{B},\boldsymbol{\omega})$ is feasible for \eqref{eq:OCI_kahan}, then $(\mathbf{B},\mathbf{Y})$ with  $\mathbf{Y} = \sum_{b=1}^M\boldsymbol{\omega}_b\mathbf{Y}_b$ is immediately feasible for \eqref{eq:OCI_Y}, and, by Lemmas~\ref{lem:equiv_OCI_prob_1} and~\ref{lem:feas_iff}, it follows that $\mathbf{H}^\top\mathbf{C}^{-\top}\mathbf{W}^\top\mathbf{W}\mathbf{C}^{-1}\mathbf{H}$ is full rank. In the other direction, let $\mathbf{H}^\top\mathbf{C}^{-\top}\mathbf{W}^\top\mathbf{W}\mathbf{C}^{-1}\mathbf{H}$ be full rank. Then for $\boldsymbol{\omega}_b = 1/M$ for all $b = 1,2,\ldots, M$, $\mathbf{Y} = \sum_{b=1}^M\boldsymbol{\omega}_b\mathbf{Y}_b$ has the same column space as $\mathbf{W}^\top\mathbf{W}$. Therefore, $\mathbf{H}^\top\mathbf{C}^{-\top}\mathbf{Y}\mathbf{C}^{-1}\mathbf{H}$ is also full rank and, with $\mathbf{B} = (\mathbf{H}^\top\mathbf{C}^{-\top}\mathbf{Y}\mathbf{C}^{-1}\mathbf{H})^{-1}$, $(\mathbf{B},\boldsymbol{\omega})$  is feasible for \eqref{eq:OCI_Kahan_optimal_Rz}.


\bibliographystyle{IEEEtran}
\bibliography{../../../../Papers/_bib/references-c.bib,../../../../Publications/bibliography/parsed-minimal/bibliography.bib}

\end{document}